\renewcommand\footnotetextcopyrightpermission[1]{}
\renewcommand\@formatdoi[1]{\ignorespaces}
\def \rd {\textit{rd}}
\newtheorem{theorem}{Theorem}
\newtheorem{lemma}{Lemma}
\newtheorem{definition}{Definition}
  \providecommand\BibTeX{{%
    \normalfont B\kern-0.5em{\scshape i\kern-0.25em b}\kern-0.8em\TeX}}}
\begin{document}

\title{Beyond Time Complexity: Data Movement Complexity Analysis for Matrix Multiplication}


\author{Wesley Smith}
\email{wsmith6@cs.rochester.edu}
\orcid{1234-5678-9012}
\affiliation{
  \institution{University of Rochester}
  \city{Rochester}
  \state{NY}
  \country{USA}
}

\author{Aidan Goldfarb}
\email{agoldfa7@u.rochester.edu}
\affiliation{
  \institution{University of Rochester}
  \city{Rochester}
  \state{NY}
  \country{USA}
}

\author{Chen Ding}
\email{cding@cs.rochester.edu}
\affiliation{
  \institution{University of Rochester}
  \city{Rochester}
  \state{NY}
  \country{USA}
}

\renewcommand{\shortauthors}{Trovato and Tobin, et al.}

\begin{abstract}

Data movement is becoming the dominant contributor to the time and energy costs of computation across a wide range of application domains.  However, time complexity is inadequate to analyze data movement. This work expands upon Data Movement Distance, a recently proposed framework for memory-aware algorithm analysis, by 1) demonstrating that its assumptions conform with microarchitectural trends, 2) applying it to six variants of matrix multiplication, and 3) showing it to be capable of asymptotically differentiating algorithms with the same time complexity but different memory behavior, as well as locality optimized vs. non-optimized versions of the same algorithm. In doing so, we attempt to bridge theory and practice by combining the operation count analysis used by asymptotic time complexity with per-operation data movement cost resulting from hierarchical memory structure. Additionally, this paper derives the first fully precise, fully analytical form of recursive matrix multiplication's miss ratio curve on LRU caching systems. Our results indicate that the Data Movement Distance framework is a powerful tool going forward for engineers and algorithm designers to understand the algorithmic implications of hierarchical memory.
\end{abstract}


\keywords{matrix multiplication, hierarchical memory, algorithm analysis, data movement}


\maketitle
\pagestyle{plain}

\section{Introduction}
In exascale computing, the cost of data movement exceeds that of computation \cite{dallyExascale}: as such, data movement is a key factor in not only system performance but also in the computer science community's growing responsibility to address computing's role in the climate crisis.  Optimizing a program or a system for locality is difficult, as modern memory systems are large and complex.  For portability, we should not program data movement directly, but we should be aware of its cost.  However, there does not yet exist a single quantity that can characterize the effect of locality optimization at the program or algorithm level. Standard techniques for understanding algorithm-hierarchy interactions, like miss ratio analysis, yield insight for algorithm designers with a target set of machine parameters in mind, but do not allow for general understanding of an algorithm's intrinsic data movement for an arbitrary target machine.


The actual effect of cache usage depends on all components of a program and also its running environment.  Assumptions about a machine may be wrong, imprecise, or soon obsolete.  A program may run on a remote computer in a public or commercial computing center with limited information about its memory system.  Auto-tuning can select the best parameters for a given system, but it is difficult to tune if a system is shared. 

In a recent position paper, \citet{DingS:MEMSYS21} defined an abstract measure of memory cost called Data Movement Distance (DMD).  Memory complexity is measured by DMD in the same way time complexity is by operation count.  They showed results for two types of data traversals with only constant factor differences in DMD.

This paper presents DMD analysis for different approaches to matrix multiplication.  It differs from past work in several ways.  First, unlike practical analysis, i.e. those based on miss ratios, DMD analysis is asymptotic and machine agnostic.  Second, unlike I/O complexity, DMD analysis includes the effect of a cache hierarchy. In addition, the derivation is radically different from past solutions. For example, efficient algorithms often make use of temporaries that are dynamically allocated.  They share cache, but the cache sharing is not analyzed by past complexity analysis.  It is measured by practical analysis in concrete terms, i.e. cache misses, not asymptotic terms.

\section{Main Contributions}
The focus of this work is exploring the algorithmic implications of hierarchical memory systems by fleshing out the memory-aware algorithm analysis framework Data Movement Distance (DMD) introduced in \cite{DingS:MEMSYS21}. The main contributions are as follows:
\begin{itemize}
    \item Derivation and empirical validation of the first fully precise analytical form of recursive matrix multiplication's miss ratio on LRU caching systems,
    \item Application of the DMD framework for memory-aware algorithm analysis to four variants of matrix multiplication,
    \item Exploration of the effects of locality optimizations on the previous results,
\end{itemize}
In addition, we expand the motivation for and justification of the DMD framework by demonstrating the following:
\begin{itemize}
    \item microarchitectural trends in cache memory conform with the framework's assumptions,
    \item DMD is capable of asymptotically differentiating algorithms with the same time complexity as a result of their memory behavior, as well as asymptotically differentiating between locality optimizations
\end{itemize}
Taken together, we believe that the results of our analyses indicate that the DMD framework is a powerful tool going forward for engineers and algorithm designers to understand the algorithmic implications of hierarchical memory.
\section{Background and Motivation}
\subsection{Locality Concepts}
The locality concept most central to this work is \textit{reuse distance} (RD) \cite{Yuan+:TACO19}. Reuse distance, or LRU stack distance~\citep{Mattson+:IBM70}, characterizes an individual memory access by counting the number of distinct memory locations accessed by the program between the most recent previous use of that memory location and the current use. 

For example, let letters denote distinct memory locations in the following access trace:
\begin{align*}
    abbca
\end{align*}
In this example, the reuse distance of the second access to $b$ is 1, as only $b$ occurs in the window from position 2 to position 3, while the reuse distance of the second access to $a$ is 3, as $a,b,c$ all occur in the window from position 1 to position 5.

Reuse distance and miss ratio for fully-associative LRU cache are interconvertible, with their relation as follows:
\begin{align*}
    MR(c) = P(rd > c)
\end{align*}
Accesses with reuse distance greater than $c$ are misses in LRU caches of size $c$ or less. So, reuse distance distributions and miss ratio curves are the same information.
\subsection{Data Movement Distance}
The most ubiquitous technique for algorithm cost analysis is asymptotic time complexity, which measures operation count as a function of input size. However, on machines with hierarchical memory, execution cost will scale with input size \textit{faster} than time complexity would indicate, because as data size increases, more program data must be stored in large, slow hierarchy components: the cost of data movement scales with input size as well. Data Movement Distance (DMD) is a novel framework for memory-aware algorithm analysis proposed by Ding and Smith \cite{DingS:MEMSYS21} in which operation count is combined with per-access data movement cost by considering the algorithm's reuse distance distribution. 

Because data movement cost varies across machines, the DMD framework includes an abstracted version of a memory hierarchy, termed the \textit{geometric stack}, on which the behavior of algorithms is considered. The geometric stack can be understood to be an infinite-level memory hierarchy in which each level stores a single datum. The cost of accessing the datum at level $n$ is $\sqrt{n}$. DMD for a program is then defined as follows:
\begin{definition}[Data Movement Distance]
\label{def:dmd}
For a program $p$ with data accesses $a_i$, let the reuse distance of $a_i$ be $d_i$.  The DMD for $p$ under caching algorithm $A$ is
\begin{align*}
    DMD(p) = \sum_i \sqrt{d_i}
\end{align*}
\end{definition}
In words, a program's DMD is the sum of the square roots of its memory accesses' reuse distances. As such, we arrive at our first theorem:
\begin{theorem}[DMD bounds]
Let the time complexity of program $p$ be $O(f(n))$ and let its space complexity be $O(g(n))$. Then
\begin{align*}
    f(n) \leq O(DMD(p)) \leq f(n) \cdot \sqrt{g(n)}
\end{align*}
\end{theorem}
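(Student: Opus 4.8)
The plan is to prove both inequalities by a single term-by-term estimate of the sum $\sum_i \sqrt{d_i}$, relying on two elementary observations: that the number of summands is controlled by the time complexity, and that each individual reuse distance is squeezed between $1$ and the program's memory footprint. Monotonicity of the square root then does the rest, so I expect the argument to be short once the bookkeeping around definitions is settled.

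First I would pin down the number of terms. Each elementary operation counted by the time complexity reads and writes only a constant number of operands, so the total number of memory accesses $m$ --- which is exactly the number of summands in $DMD(p)$ --- satisfies $m = \Theta(f(n))$. Next I would establish the per-access bounds $1 \le d_i \le c\,g(n)$ for a constant $c$: the lower bound holds because the reuse window of any access contains at least the reused datum itself, and the upper bound holds because a reuse distance counts \emph{distinct} locations inside a window and hence cannot exceed the number of distinct locations the program ever touches, i.e.\ its space footprint, which is $O(g(n))$. Combining these with the monotonicity of $\sqrt{\cdot}$ gives
\begin{align*}
    f(n) = \Theta(m) = \sum_i 1 \le \sum_i \sqrt{d_i} = DMD(p) \le \sum_i \sqrt{c\,g(n)} = m\sqrt{c\,g(n)} = O\bigl(f(n)\sqrt{g(n)}\bigr),
\end{align*}
which is exactly the claimed two-sided bound.

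The step I expect to be the main obstacle is not the algebra but the handling of first (cold) references, which have no previous use and whose reuse distance is conventionally $\infty$ or undefined; left unaddressed these would destroy the upper bound. I would resolve this by adopting the convention that a cold access is charged the distinct footprint seen so far --- at most $g(n)$ --- so that $\sqrt{d_i} \le \sqrt{c\,g(n)}$ continues to hold uniformly, or equivalently by observing that there are at most $g(n)$ such accesses and that, since each datum is touched at least once so $g(n) \le f(n)$, their contribution is absorbed into the asymptotics. The remaining care is purely definitional: justifying that the ``number of distinct locations in a window'' used by reuse distance is bounded by the space complexity $g(n)$, which is where the cache sharing and dynamically allocated temporaries mentioned in the introduction must be counted consistently.
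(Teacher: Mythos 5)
Your proof is correct and follows essentially the same route as the paper's: bound the number of accesses by the time complexity, squeeze each reuse distance between $1$ and the $O(g(n))$ data footprint, and apply Definition~\ref{def:dmd} with monotonicity of the square root. Your extra care with cold (first-touch) references is a reasonable refinement that the paper's one-line proof silently glosses over, but it does not change the argument.
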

\begin{proof} It suffices to note that this program will have $f(n)$ memory accesses, the minimum value a reuse distance can take is 1, and the maximum value a reuse distance can take is $g(n)$ (data size). Then, Definition \ref{def:dmd} yields the above.
\end{proof}
An intuitive explanation for the $\sqrt{n}$ cost function is that it is the distance the data must travel if we represent the infinite-level hierarchy as a series of concentric 2-D shapes (such as circles) and let area represent capacity. In the following section, we will discuss the microarchitectural trend that this cost function reflects. In Ding and Smith's formulation for DMD, cache replacement policy is a parameter, however in the derivations in this paper we will use LRU replacement.

Throughout this paper we will use the asmyptotic equivalence notation $\sim()$, which is identical to big-O notation except it retains primary factor coefficients.
\subsection{Why $\sqrt{n}$?}
At the core of the DMD framework is the notion that we need a relationship between stack position, or reuse distance, and cost. Ding and Smith \cite{DingS:MEMSYS21} use $\sqrt{n}$ with the argument that it reflects physical memory layout. We expand on that here by demonstrating that $\sqrt{n}$ also reflects the cost of memory access on modern architectures.
\begin{figure}
    \centering
    \includegraphics[width=\columnwidth]{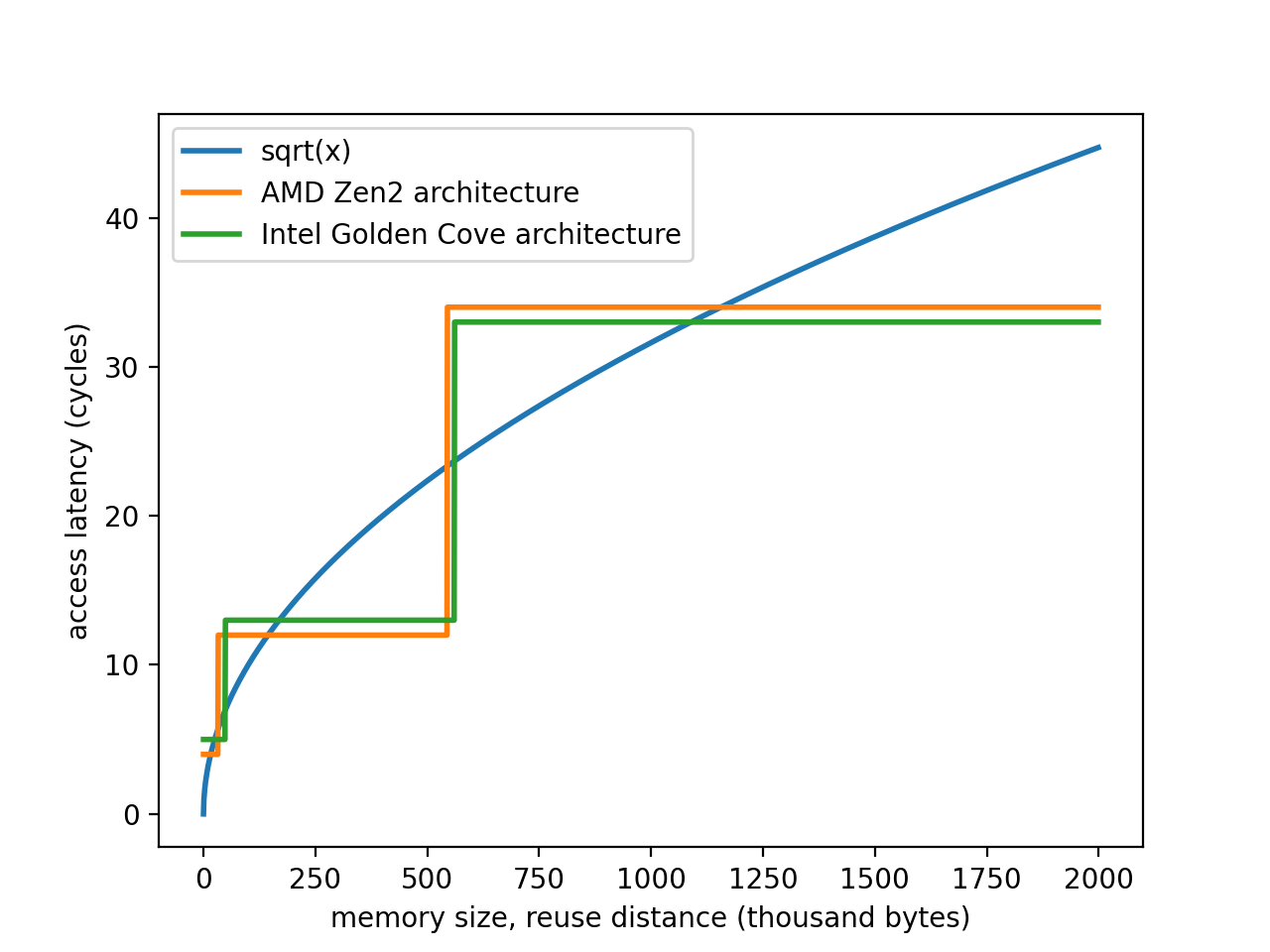}
    \caption{Access latencies of modern microarchitectures as a function of stack position}
    \label{fig:latencies}
\end{figure}
Figure \ref{fig:latencies} demonstrates access latency as a function of stack position for cache sizes and latencies corresponding to the AMD Zen2 and Intel Golden Cove architectures (numbers from \cite{cutress_2019}) up to $\approx2$MB data size. As stack position increases, data is forced into slower caches and its access latency increases.  While Fig. \ref{fig:latencies} assumes LRU caching and thus does not capture hardware optimizations, it is clear that the general trend in these architectures' latencies is well captured by the $\sqrt{x}$ family of functions. 

This relationship has been observed in other contexts as well: Yavitz et al. \cite{Yavitz+14} show that access latency scales with the square root of cache size, and Cassidy and Andreou \cite{Cassidy+TC11} demonstrate that energy cost behaves similarly. These results, and thus the square root concept, have been used in the design of cutting edge memory systems and techniques, such as Tsai et al.'s Jenga \cite{Tsai+ISCA17}.

\subsection{Applications}

DMD is to measure hierarchical locality.  
A program has \emph{hierarchical locality} if it makes use of a \emph{cache hierarchy}, where there is has more than a single level of cache, and the cache size and organization may vary from machine to machine.  Such cache hierarchies are the norm on today's machines.    

As a contrast, consider single-cache locality which means programming to utilize a cache of a specific size.  Single-cache locality is unreliable for two reasons.  The first is portability.  The actual cache usage depends on the choice of programming languages, compilers, and target machines.  The second problem is environmental, e.g. interference from run-time systems and from peer programs that share the same cache.  Single-cache locality is not a robust program property because of these two sources of uncertainty. 

Hierarchical locality is portable and elastic.  
It is independent of implementation, and it runs well in a shared environment.  Cache oblivious algorithms, e.g. recursive matrix multiplication, were developed for  hierarchical locality.  However, the hierarchical effect was not formalized or quantified in prior work.  Next, we use DMD to analyze this effect.  


\section{Recursive Matrix Multiplication Miss Ratio Analysis}
In this section we derive what is to our knowledge the first fully precise analytical form of recursive matrix multiplication's (RMM) miss ratio curve on LRU memory. Previously, RMM's asymptotic cache behavior has been derived \cite{Prokop1999}, but we derive the precise, numeric miss ratio for all cache sizes and matrix sizes and validate our model's accuracy against instrumented executions.

We will derive miss ratio by deriving the distribution of reuse distances incurred by RMM. Reuse distance and miss ratio have been shown to be interconvertible \cite{Yuan+:TACO19}, meaning they are the same information.

At a high level, our approach will be as follows:
\begin{enumerate}
    \item decompose the recursive algorithm into its canonical tree structure,
    \item split memory accesses into accesses to temporaries and accesses to input matrices,
    \item derive symbolic representations of each type of memory access pattern at each level of the tree,
    \item iterate over the entire tree to create the full reuse distribution.
\end{enumerate}
The full derivation is too lengthy to include in this paper; the interested reader may find a more detailed version at \textbf{github.com/anon}. We will present the lemmas and theorems that constitute the majority of the contribution of this section as well as the skeleton of the full derivation, but will elide in-depth proofs of many of the derived relationships. We demonstrate correctness by implementing our model and showing its functional equivalence to an instrumented version of RMM.
\begin{verbatim}
Function rmm(A,B):
    n = A.rows
    let C be a new nxn matrix
    if n == 1:
        C11 = A11 * B11
    else:
        C11 = rmm(A11, B11) + rmm(A12, B21)
        C12 = rmm(A11, B12) + rmm(A12, B22)
        C21 = rmm(A21, B11) + rmm(A22, B12)
        C22 = rmm(A21, B12) + rmm(A22, B22)
    return C
\end{verbatim}
 The above contains the pseudocode for standard recursive matrix multiplication, and is the form for RMM that we will analyze. Note that this is RMM at its most basic: we don't consider optimizations such as using a base case larger than $1\times1$ or temporary reclamation and reuse. We will consider such optimizations when analyzing the algorithm's data movement distance, but for our miss ratio/reuse distance derivation we analyze RMM in its simplest form. Here, each call to multiply $N \times N$ matrices decomposes into eight recursive calls, each multiplying $\frac{N}{2} \times \frac{N}{2}$ matrices. After each pair of recursive calls, there is an addition step.
\begin{figure}[h!]
    \centering
    \includegraphics[width=\columnwidth]{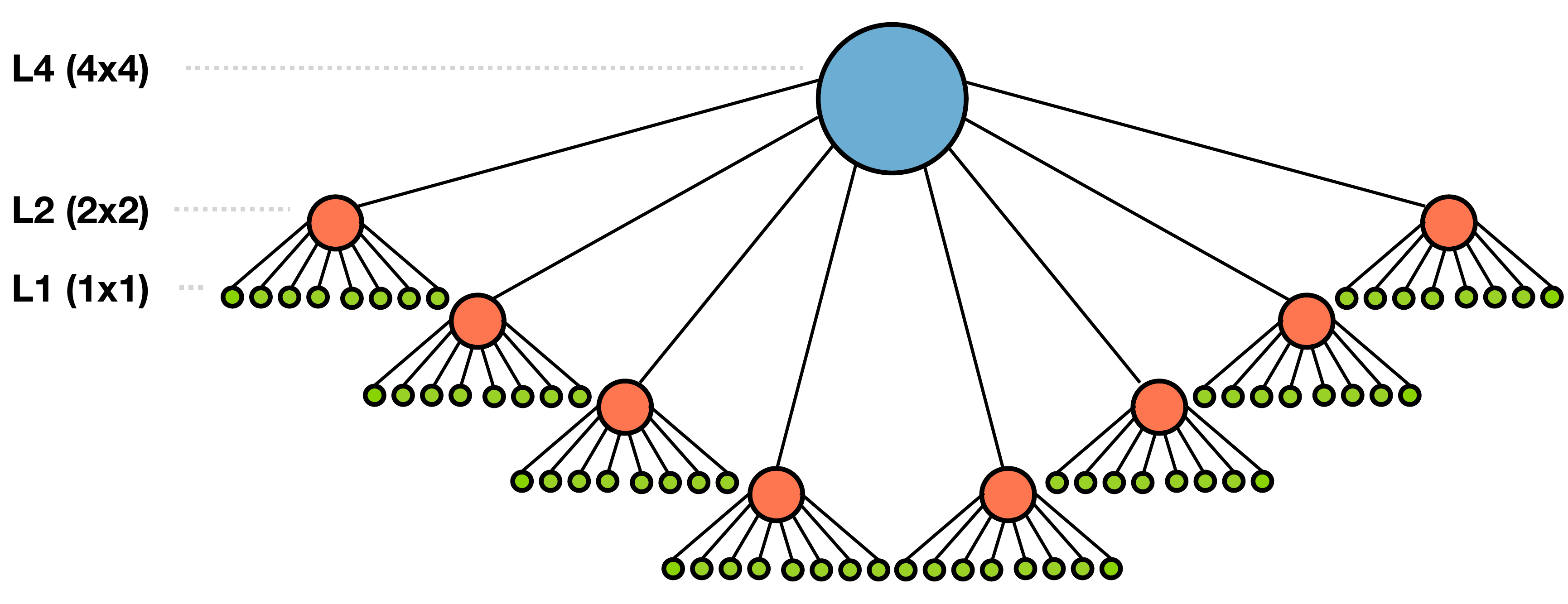}
    \caption{The tree structure of recursive matrix multiplication}
    \label{fig:tree1}
\end{figure}

Figure \ref{fig:tree1} contains the decomposition of a 4x4 multiplication. The blue node, marked L4, represents a 4x4 multiplication, the orange nodes, marked L2, represent 2x2 multiplications, and the green nodes, marked L1, represent 1x1 multiplications. Execution is a depth-first, left-to-right traversal of this tree. In the rest of this derivation, we will use the notation $LN$ to refer to a node in the tree that is multiplying $N \times N$ matrices.
\subsection{Temporaries}
We first analyze the behavior of RMM's temporary usage:
\begin{definition}[Temporary Count]
\label{def:tempcount}
Let $T_N$ represent the number of temporaries introduced in an $LN$ call to recursive matrix multiplication, i.e. multiplying NxN matrices. Then 
\begin{align*}
    T_N = \sum_{i=0}^{log_2(N)}8^{log_2(N)-i}\cdot (2^i)^2 = N^2\cdot(2N-1).
\end{align*}
\end{definition}
Having a uniform branching factor makes deriving the node count at each level trivial:
\begin{definition}
\label{def:nodecount}
In an $NxN$ recursive matrix multiplication, the number of $LX$ nodes is as follows:
\begin{align*}
    \# LX = 8^{log_2(N) - log_2(X)} = \frac{N^3}{X^3}
\end{align*}
\end{definition}
The next step is to exploit the tree's symmetry to understand how much repetition there is in the reuse distances of temporaries:
\begin{lemma}[Temporary Symmetry]
\label{lemma:symmetry}
Let $F_T(i,j,N,a)$ denote the reuse distance of the $(i,j)$-th element of the $a$-th temporary matrix introduced at $LN$. Then
\begin{align*}
    F_T(i,j,N,a) = F_T(i,j,N,(a \hspace{0.1cm} \% \hspace{0.1cm} 2))
\end{align*}
\end{lemma}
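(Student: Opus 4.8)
The plan is to reduce the claim to two structural facts about the recursion tree plus the elementary observation that reuse distance is invariant under any bijective relabeling of memory addresses, since it counts only the number of \emph{distinct} locations in a window, not their identities. First I would write the execution trace at an $LN$ node explicitly as four consecutive \emph{pair blocks}, one per quadrant of $C$: each pair block computes a first temporary $rmm(\cdot,\cdot)$, then a second temporary $rmm(\cdot,\cdot)$, then performs the elementwise addition of the two into a block of $C$. Under the zero-based indexing implied by the lemma, the first operands are exactly the even-indexed temporaries and the second operands the odd-indexed ones, so $a$ and $a\%2$ always play the same role in their respective blocks.

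Next I would localize the reuse window. The two accesses defining $F_T(i,j,N,a)$ are the final write to entry $(i,j)$ during temporary $a$'s own computation and its read during that block's addition. Because execution is depth-first and left-to-right, once a block's addition begins the temporary has not been touched since its subtree returned---the intervening sibling computation reads only fresh temporaries and input sub-blocks disjoint from temporary $a$---and no access from any other pair block intervenes. Hence the entire reuse window of $\text{temp}_a[i,j]$ lies inside the single pair block that produced it.

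The crux is then to establish that the four pair blocks are isomorphic as access patterns up to relabeling of addresses. Each block multiplies two pairs of distinct $\frac{N}{2}\times\frac{N}{2}$ sub-blocks of $A$ and $B$ with recursion subtrees of identical shape (Definition~\ref{def:nodecount}), allocates the same count of fresh temporaries in the same order (Definition~\ref{def:tempcount}), and sums over the same $(\frac{N}{2})^2$ index set. This yields a bijection between the distinct locations in the window of $\text{temp}_a[i,j]$ and those in the window of $\text{temp}_{a'}[i,j]$ whenever $a\equiv a'\pmod 2$; since reuse distance is invariant under such a bijection, $F_T(i,j,N,a)=F_T(i,j,N,a\%2)$ follows.

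I expect the main obstacle to be the \emph{exactness} of the count rather than a mere asymptotic match. I must verify from the pseudocode that, within every pair block, the two operands draw on disjoint input sub-blocks, so that no shared address collapses the distinct-location count differently from one block to another, and that the sibling computation separating a first operand from its addition touches precisely the same number of distinct locations in all four blocks. Both checks reduce to the identical subtree shapes, but they are exactly the points where a loose argument would silently replace ``same value'' with ``same order of growth.''
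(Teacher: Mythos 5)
Your proposal is correct, but there is nothing in the paper to compare it against line by line: the paper never proves Lemma~\ref{lemma:symmetry}. The authors state it, give a one-sentence interpretation (at $LN$ there are at most two distinct matrices of temporary reuse distances, corresponding to the first and second operands of an addition group), defer the derivation to an external repository, and support correctness empirically by matching their model against an instrumented execution up to $256\times256$. Your argument supplies exactly the analytical content that is elided: the decomposition of a parent call into four addition-group ``pair blocks,'' the localization of each temporary element's single write/read pair inside one pair block (with the sibling call touching only fresh temporaries and disjoint input sub-blocks), the check that in all four quadrant groups of the pseudocode the two operands' inputs are disjoint, and the conclusion via trace isomorphism together with the relabeling-invariance of reuse distance. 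This buys a guarantee for all $N$ and an explanation of \emph{why} the symmetry holds, where the paper's empirical route only certifies tested sizes. Two repairs are needed to make your argument match the lemma's full scope. First, $a$ ranges over every temporary introduced at level $N$ in the entire execution, not just the eight inside one parent, so the isomorphism must be asserted for any two same-level pair blocks anywhere in the tree; this follows by induction on the recursion, since RMM's control flow is data-oblivious and every call at a given level issues the identical operation sequence up to renaming of its inputs and its fresh allocations. Your justification (identical subtree shapes, identical allocation order, disjoint operands) is already parent-agnostic, so this is a matter of stated scope rather than a missing idea. Second, a level-indexing nit: in the paper's convention the temporaries ``introduced at $LN$'' are the results returned by $LN$ calls and are consumed by the addition in the $L2N$ parent, so your pair blocks sit one level above where you placed them; nothing else in the argument changes.
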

Lemma \ref{lemma:symmetry} has a straightforward high-level interpretation: at $LN$, there are only at most two $N\times N$ matrices' worth of temporaries with different reuse distances. These correspond to the first and second elements in one of the additions in the RMM pseudocode at the beginning of Section 4, with each addition group introducing temporaries with identical behavior.

We will denote these two temporary matrices $DT_{1,N}$ and $DT_{2,N}$. We elide their derivations, but the approach involved a mix of purely analytical analysis of RMM's tree structure and pattern extraction from empirical results from an instrumented version of RMM. Their forms are as follows, where ellipses indicate the same value in all columns or values decreasing by 1 per column:
\begin{lemma}[First Temporary Matrix]
\label{lemma:temp1}
Let $DT_{1,N}$ denote the matrix of reuse distances of temporaries introduced at level N in the first node of an addition group. Then
\footnotesize
\begin{align*}
    &\hspace{1.0cm}DT_{1,N} = 
    \\&\begin{bmatrix}
    d_1 & ...& d_1 - (\frac{n}{2} - 1)& d_2 - (\frac{n}{2})^2 + \frac{n}{2}  & ... & d_2 - (\frac{n}{2})^2 + 1\\
    d_1& ...& d_1 - (\frac{n}{2} - 1)& d_2 - (\frac{n}{2})^2 + n & ... & d_2 - (\frac{n}{2})^2 + \frac{n}{2} + 1\\
    ...&...&...&...&...&...\\
    d_1& ...& d_1 - (\frac{n}{2} - 1)& d_2 & ... & d_2 - \frac{n}{2} - 1\\
    d_1 - \phi(n)& ...&{\scriptscriptstyle d_1 - \phi(n) - (\frac{n}{2} - 1)}&{\scriptscriptstyle d_2 - \delta(n) - (\frac{n}{2})^2 + \frac{n}{2}  }& ... & {\scriptscriptstyle d_2 - \delta(n) - (\frac{n}{2}^2) + 1}\\
    d_1 - \phi(n)& ...&{\scriptscriptstyle d_1 - \phi(n) - (\frac{n}{2} - 1)}&{\scriptscriptstyle d_2 - \delta(n) - (\frac{n}{2})^2 + n }& ... & {\scriptscriptstyle d_2 - \delta(n) - (\frac{n}{2})^2 + \frac{n}{2} + 1}\\
    ...&...&...&...&...&...\\
    d_1 - \phi(n)& ...&{\scriptscriptstyle d_1 - \phi(n) - (\frac{n}{2} - 1)}& d_2 - \delta(n) & ... & d_2 - \delta(n) - \frac{n}{2} - 1\\
    \end{bmatrix}
\end{align*}
\normalsize
where 
\begin{align*}
    \scriptstyle \delta(N) &=\scriptstyle N^3\\
    \scriptstyle \phi(N) &=\scriptstyle N^3 - \frac{N^2}{2}\\
    \scriptstyle d_1 = \lfloor2D_N - (2\cdot&\scriptstyle(T_{\frac{N}{2}} - 2((\frac{N}{2})^2 - 1))\rfloor\\
    \scriptstyle d_2 = \lfloor2D_N - (4T_{\frac{N}{2}} + 2(\frac{N}{2})^2 &\scriptstyle- 2(\frac{N}{2} - 1) - (2(\frac{N}{2})^2 - \frac{N}{2}))\rfloor
\end{align*}
\end{lemma}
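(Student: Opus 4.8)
The plan is to treat each entry of $DT_{1,N}$ as the answer to a counting problem on RMM's linear access trace. By Lemma~\ref{lemma:symmetry} it suffices to analyze a single representative first node of an addition group (the case $a \equiv 1 \bmod 2$), so I would fix one such $LN$ call, whose allocated $N\times N$ temporary $C$ is the left addend in its parent's matrix addition. Each element $C[i,j]$ is touched exactly twice: once when it is finalized inside this recursive call, and once when the parent reads it during the ``$+$''. Hence $F_T(i,j,N,1)$ is precisely the number of \emph{distinct} memory locations appearing in the trace strictly between that write and that read, and the whole lemma reduces to evaluating this count as a function of $(i,j)$.

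First I would locate the two events in the depth-first, left-to-right order. After $C[i,j]$ is finalized, the remainder of the current call returns, then the \emph{sibling} recursive call that produces the second addend executes in full, and finally the parent's addition runs up to reading $C[i,j]$. The sibling subtree supplies a uniform ``bulk'' contribution shared by every entry, and from it the two base constants arise: the window's distinct-location count is essentially $2D_N$ corrected by the volume of the intervening computation, measured through the temporary count (Definition~\ref{def:tempcount}) and the number of distinct input-block elements it touches. This is where the $T_{N/2}$ and $(\tfrac{N}{2})^2$ terms of $d_1$ and $d_2$ enter. The split into a left half (base $d_1$) and a right half (base $d_2$) reflects that the two column-halves of $C$ are read back at systematically different points of the addition, so their windows enclose different amounts of fresh data.

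Next I would extract the intra-block variation. Within a row the entries change by exactly $1$ per column because consecutive elements are finalized and later read one step apart, shifting the enclosing window by a single distinct location at a time; this yields the $d_1,\dots,d_1-(\tfrac{n}{2}-1)$ progression and the analogous $d_2$ progression. The block-level jumps---the shift $\phi(N)=N^3-\tfrac{N^2}{2}$ between the top and bottom row-halves of the left block, the shift $\delta(N)=N^3$ on the right, and the $(\tfrac{n}{2})^2$ offsets---come from the recursive $\tfrac{N}{2}\times\tfrac{N}{2}$ sub-block layout: crossing into a new sub-block inserts a sub-block's worth $(\tfrac{n}{2})^2$ of fresh locations, while crossing a full recursive call inserts $\delta(N)=N^3$, consistent with the node-count bookkeeping of Definition~\ref{def:nodecount}. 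A subtlety to respect is that the two halves vary differently (the left half is constant down each row-half with one large jump, whereas the right half increments by $\tfrac{n}{2}$ per row); both patterns must be reproduced, and the floor operations appear only when the symbolic counts are reconciled with integer access indices.

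The hard part will be the exact, non-asymptotic accounting of distinctness inside the window. Because the window spans an entire sibling subtree, each input-matrix element is re-accessed many times yet counted once, while temporaries are allocated continuously; cleanly separating these two populations well enough to pin down the constants in $d_1$ and $d_2$---rather than merely their leading order---is the delicate step, and the interaction between the column-wise decrement and the block-wise jumps is what forces the two-region form above. This is presumably why the authors pair the analytical tree argument with pattern extraction from an instrumented RMM run: the \emph{structure} of the variation is derived from the traversal order, but the precise constants are confirmed empirically.
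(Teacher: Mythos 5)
Your reduction is the right one, and it matches the skeleton the paper hints at (the paper itself gives no proof of this lemma: it states that the derivation mixed analytical tree analysis with pattern extraction from instrumented runs, and validates the final model only empirically). In particular, you correctly observe that each element $C[i,j]$ of a first-node temporary is accessed exactly twice --- written during that call's internal additions, read during the parent's addition --- so every entry of $DT_{1,N}$ is a distinct-location count over the window spanning the tail of the current call, the entire sibling call, and the head of the parent's addition; the bulk term $2D_N$ corrected by $T_{N/2}$-terms follows from exactly that decomposition.

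However, two of the mechanisms you invoke for the \emph{structure} of the matrix are wrong, and they are precisely what determines the formulas, so the plan as stated would not reproduce the lemma. First, you attribute the $d_1$-versus-$d_2$ split between column halves to the elements being ``read back at systematically different points of the addition.'' That effect is only $O(N)$, whereas $d_1 - d_2 = 2T_{N/2} + O(N^2) \approx \frac{N^3}{2}$. The split really comes from the \emph{write} times: $C_{11}$ is finalized during the call's first internal addition and $C_{12}$ during the second, so a right-half element's window excludes two additional $L(\frac{N}{2})$ sub-calls' temporaries ($2T_{N/2}$) plus the fresh input quadrants those sub-calls touch. The same mechanism, not ``crossing a full recursive call,'' produces $\phi$ and $\delta$: a single $L(\frac{N}{2})$ sub-call introduces only $T_{N/2} \approx \frac{N^3}{4}$ distinct temporaries, so your claim that crossing one call inserts $\delta(N)=N^3$ is off by a factor of four; $\delta(N) \approx 4T_{N/2}$ corresponds to the four sub-calls (two whole addition groups) separating the write phases of the top and bottom quadrants. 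Second, your per-column argument (``one distinct location per step'') conceals a two-ended cancellation --- each column step deletes two temporary locations at the write end of the window and adds one parent-write location at the read end, net $-1$ --- and it cannot produce the asymmetry the lemma exhibits: the left block is constant down its rows while the right block grows by $\frac{N}{2}$ per row. That row dependence arises because for a left-half element all of $C_{12}$ lies inside its window (it is written after), whereas for a right-half element only the portion of $C_{11}$ re-read in the parent's addition head lies inside its window, and that portion grows with the row index $i$; the remaining $i$-dependence cancels exactly between the write end and the read end. Correcting these two mechanisms is the actual content of the lemma; without them your accounting yields wrong block offsets and no row dependence in the right half.
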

\begin{lemma}[Second Temporary Matrix]
\label{lemma:temp2}
Let $DT_{2,N}$ denote the matrix containing the reuse distances of the temporaries introduced at level N in the second node of an addition group. Then
\footnotesize
\medskip \begin{align*}
     &\hspace{1.0cm}DT_{2,N} =
    \\&\begin{bmatrix}
    d_3 &  ...& d_3 & d_4  & ... & d_4 \\
    d_3 + n &  ...& d_3 + n & d_4 + \frac{3N}{2} & ... & d_4 + \frac{3N}{2} \\
    ...&...&...&...&...&...\\
    d_3 + n(\frac{n}{2} - 1) & ...& d_3 + n(\frac{n}{2} - 1) & d_4 + (\frac{n}{2} - 1)(\frac{3N}{2}) & ... & d_4 + (\frac{n}{2} - 1)(\frac{3N}{2}) \\
    d_3 - \gamma(N) &  ...& d_3 - \gamma(N) & d_4- \omega(N)  & ... & d_4- \omega(N) \\
    d_3 - \gamma(N) + n &  ...& d_3 - \gamma(N) + n & d_4- \omega(N) + \frac{3N}{2} & ... & d_4- \omega(N) + \frac{3N}{2} \\
    ...&...&...&...&...&...\\
    {\scriptscriptstyle d_3 - \gamma(N) + n(\frac{n}{2} - 1) }& ...&{\scriptscriptstyle d_3 - \gamma(N) + n(\frac{n}{2} - 1) }& {\scriptscriptstyle d_4- \omega(N) + (\frac{n}{2} - 1)(\frac{3N}{2}) }& ... &{\scriptscriptstyle d_4- \omega(N) + (\frac{n}{2} - 1)(\frac{3N}{2})} \\
    \end{bmatrix}
\end{align*}
\normalsize
where 
\begin{align*}
    \scriptstyle \gamma(N) &\scriptstyle= N^3 - N^2\\
    \scriptstyle \omega(N) &\scriptstyle= N^3  - \frac{N^2}{2}\\
    \scriptstyle d_3 = \lfloor D_N - &\scriptstyle(2\cdot T_{\frac{N}{2}} - (2(\frac{N}{2})^2 - 1))\rfloor\\
    \scriptstyle d_4 = \lfloor D_N - (2(\frac{N}{2})^2) - (4T_{\frac{N}{2}} &\scriptstyle- 2(\frac{N}{2})^2 + 2) -((\frac{N}{2})^2 - N) + (\frac{N}{2} + 1)\rfloor
\end{align*}
\end{lemma}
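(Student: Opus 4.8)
The plan is to derive $DT_{2,N}$ by the same trace-counting strategy that underlies Lemma~\ref{lemma:temp1}, adapted to the fact that a second-operand node is consumed almost immediately rather than after a full sibling subtree. The temporaries in question are the $N^2$ entries of the $N\times N$ result matrix $C$ that an $LN$ node allocates and fills, where this $LN$ node sits as the second operand of its parent's addition group. Each entry $C_{ij}$ is written exactly once, during the element-wise addition that produces its block inside the $LN$ call, and is read exactly once afterward, during the parent's addition. Since the memory for $C$ is freshly allocated, the write is a compulsory miss and the reported reuse distance is that of the read; hence $DT_{2,N}[i,j]$ equals the number of distinct locations touched in the window from the write of $C_{ij}$ to its read by the parent. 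By Lemma~\ref{lemma:symmetry} it suffices to compute this for one representative second-operand node.

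First I would fix the access order by expanding the $LN$ call as the depth-first left-to-right traversal of Figure~\ref{fig:tree1}: the four blocks of $C$ are produced in the order $C_{11}, C_{12}, C_{21}, C_{22}$, and the parent reads the finished $N\times N$ matrix in row-major order. This immediately explains the $2\times 2$ block structure of $DT_{2,N}$: the four quadrants of the claimed matrix correspond to the four blocks of $C$, which have different write positions and therefore different reuse windows. I would show that the left/right column split (base values $d_3$ versus $d_4$) tracks the interleaving of $C_{11}/C_{21}$ with $C_{12}/C_{22}$ under the row-major read, while the top/bottom row split (the offsets $\gamma(N)$ and $\omega(N)$) tracks that the lower blocks are written later and thus carry a strictly smaller window, the difference being exactly the distinct footprint of the intervening block computations, which Definition~\ref{def:nodecount} lets me count level by level.

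Next I would pin down the per-entry values. Within a single quadrant the write position is fixed but the read position advances as the parent's row-major loop proceeds, so moving down one row of a block increases the read index by one matrix row's worth of distinct accesses; this produces the arithmetic progressions ($+n$ in the left blocks, $+\tfrac{3N}{2}$ in the right blocks) seen down the columns of the claimed matrix. The base values $d_3$ and $d_4$ I would obtain by counting the distinct data in the window for the top-left entry of each column block: a single factor of $D_N$ for the self-footprint of the immediate read, minus corrections for the temporaries of the trailing blocks (the $T_{N/2}$ terms, via Definition~\ref{def:tempcount}) and for the input and already-read addition elements falling inside the window. A useful scale check is the contrast with Lemma~\ref{lemma:temp1}: the first operand must wait through an entire sibling $LN$ subtree and so carries a leading $2D_N$, whereas the second operand is read immediately and carries only $D_N$, which fixes the magnitude of every entry of $DT_{2,N}$.

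The main obstacle I anticipate is not this gross structure but the exact constant corrections---terms such as $2(\tfrac{N}{2})^2-1$, $(\tfrac{N}{2})^2-N$, and the trailing $+(\tfrac{N}{2}+1)$ inside $d_4$. These originate from locations touched on both sides of the write (which must not be double-counted), from boundary alignment where a block's final write abuts the start of the parent's read loop, and from the floor operations needed when a distinct-count is split unevenly between the column halves. Deriving these purely analytically is delicate and error-prone, so I would pair the combinatorial argument with pattern extraction from an instrumented run of RMM: the analysis fixes the leading $D_N$ and $T_{N/2}$ dependence, the progressions, and the quadrant offsets, while the instrumented traces confirm the small additive constants and validate the final closed form.
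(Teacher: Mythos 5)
Your proposal is correct and takes essentially the same route as the paper: the paper elides the detailed derivation of $DT_{2,N}$, stating only that it was obtained by exactly this combination of analytical trace-counting over the recursion tree and pattern extraction from an instrumented execution of RMM (with final validation against instrumented traces up to $256\times256$). The structure you identify---quadrants following the block write order of the node's result matrix, the $D_N$ versus $2D_N$ distinction between second and first operands, and the per-row arithmetic progressions---is consistent with the lemma, and your plan to fix the small additive constants empirically mirrors the paper's stated methodology.
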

The previous two lemmas contain a large amount of information and can be hard to parse: the takeaway should be that we can symbolically characterize the reuse distances of temporaries in terms of where in the tree they are created and used. The interested reader can see \textbf{github.com/anon} for more information on how the above functions and relationships were derived, but the exact specifics are not essential to understand the larger picture.

Lemmas \ref{lemma:temp1} and \ref{lemma:temp2}, which consider the reuse distances incurred by temporaries in individual nodes of our tree, will be later combined with tree structure information to create the complete picture of temporary reuse.
\subsection{Matrices A and B}
To understand the behavior of the data in matrices $A$ and $B$, we must first define what it means for a reuse of one such datum to be at $LN$ given that all the accesses to $A$ and $B$ occur in leaves of the tree.
\begin{definition}[Input data reuse levels]
\label{def:level}
When referring to a reuse of data in matrix $A$ or $B$, $LN$ indicates the \textbf{largest} N for which there exists a complete $LN$ call between the data item's use and reuse.
\end{definition}
To help visualize, Figure \ref{fig:tree_a11} contains the tree decomposition of a 4x4 matrix multiplication (L4 call) with the nodes that contain a reference to element $A[1,1]$ (1-indexed: the top-left-most element of matrix $A$) highlighted. There are two L1 reuses, between the two blue nodes and between the two pink nodes, and one L2 reuse, between the second blue node and the first pink node (as there exists a full L2 call between those two). The dashed line boxes indicated addition groups.
\begin{figure}[h]
    \centering
    \includegraphics[width=\columnwidth]{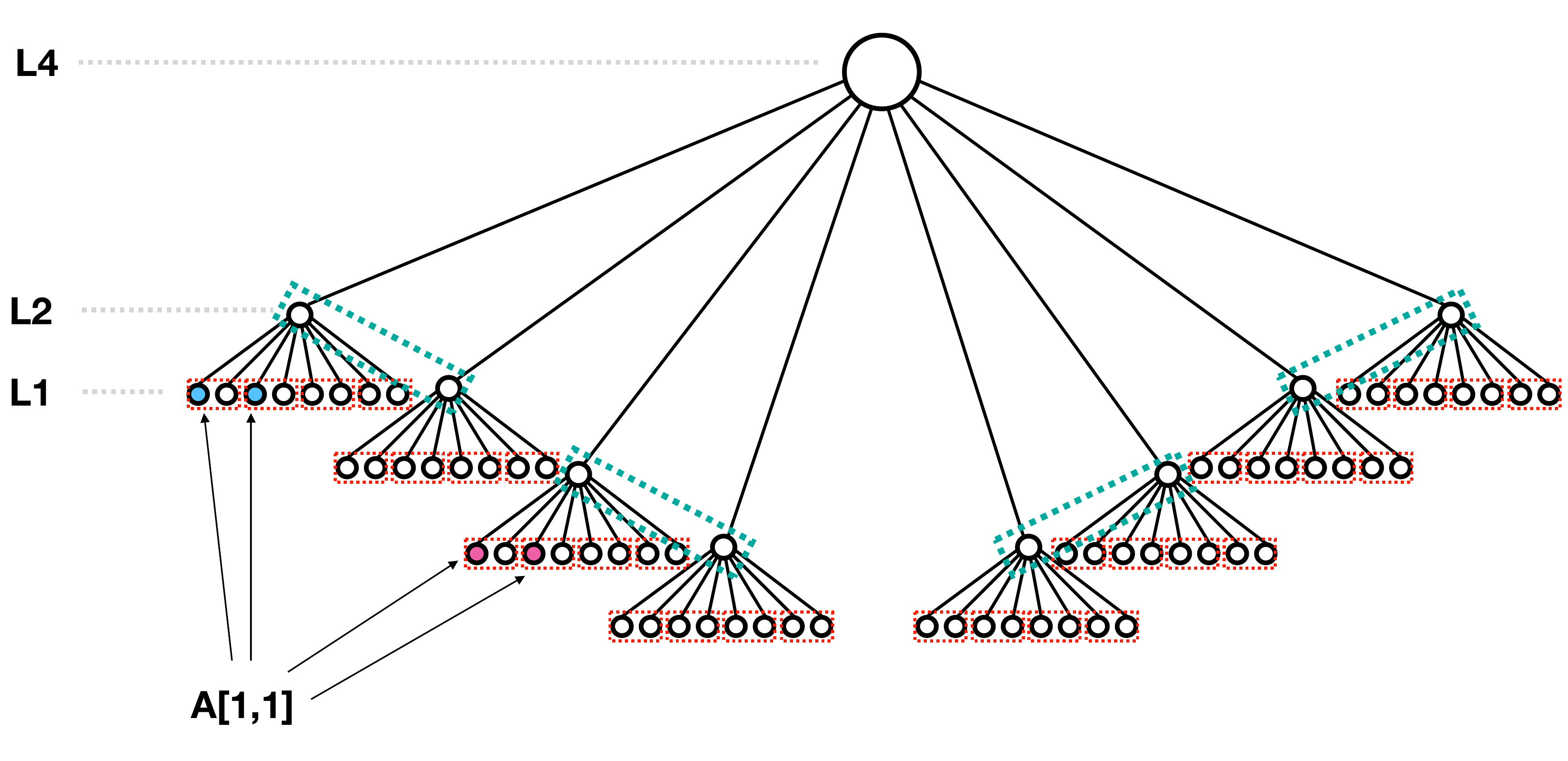}
    \caption{Tree decomposition with A[1,1] accesses highlighted}
    \label{fig:tree_a11}
\end{figure} 
Combining with tree structure, we can count occurrences of $LN$ reuse for all $N$:
\begin{lemma}
\label{lemma:abcount}
Let $RC(N,M)$ denote the number of $LM$ reuses of any item in matrix A or B in an $LN$ call (an NxN multiplication). Then
\begin{align*}
    RC(N,M) = \frac{N}{2M}
\end{align*}
\end{lemma}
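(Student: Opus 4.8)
The plan is to set up a recurrence for $RC(N,M)$ in terms of $RC(N/2,M)$ by exploiting the self-similar tree structure of RMM, and then to solve it by induction on $N$. First I would fix an arbitrary element of $A$ or $B$ and track the quadrant block it belongs to at the top $LN$ call. Reading off the eight recursive sub-calls, each quadrant block of $A$ is consumed by exactly the two sub-calls that accumulate into the two output blocks in its block-row, and symmetrically each quadrant block of $B$ by the two sub-calls in its block-column; either way, the containing block is supplied to \emph{exactly two} of the eight sub-calls. This ``used in exactly two sub-calls'' fact is the structural engine of the whole argument, and because it holds uniformly for every element it is also what ultimately makes $RC$ independent of the chosen element.

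Second, I would split the reuses of the fixed element into those that cross the boundary between its two sub-calls and those contained entirely within one sub-call. By Definition~\ref{def:level}, the inter-sub-call reuse sits at level exactly $N/2$: between the last use in the earlier sub-call and the first use in the later one there is at least one complete $L(N/2)$ call but no complete $LN$ call, since the enclosing $LN$ call is not ``between'' two of its own accesses. As the block appears in precisely two sub-calls, there is exactly one such crossing, giving the boundary contribution $RC(N,N/2)=1$. Every remaining reuse happens inside one of the two sub-calls, each itself an $L(N/2)$ multiplication in which the element occupies the same relative position, so each contributes $RC(N/2,M)$ reuses at level $M<N/2$. This yields the recurrence $RC(N,M)=2\,RC(N/2,M)$ for $M<N/2$ together with $RC(N,N/2)=1$.

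Third, I would close the induction. With base case $RC(2,1)=1$ and the recurrence above, a one-line induction on $N$ gives $RC(N,M)=N/(2M)$ for every valid level $M\in\{1,2,\dots,N/2\}$; the same induction simultaneously certifies that the count does not depend on which element of $A$ or $B$ was chosen, since the crossing term is always $1$ and the interior term inherits uniformity from the hypothesis. As a sanity check I would verify $\sum_{M} RC(N,M)=\sum_{k=0}^{\log_2 N-1} N/2^{\,k+1}=N-1$, which is exactly the number of reuses expected when each input element is touched $N$ times over the full multiplication.

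The main obstacle I anticipate is the careful bookkeeping in the level argument for elements of $B$: unlike a block of $A$, whose two occurrences are separated by a single complete $L(N/2)$ sub-call, a block of $B$ may have several complete $L(N/2)$ calls between its two occurrences in execution order. I must argue that this changes neither the level nor the count: the level is the size of the \emph{largest} complete call strictly between the two uses, which is $N/2$ whether one or three such calls intervene, and the count stays a single crossing because the block still appears in exactly two sub-calls. Pinning down Definition~\ref{def:level} precisely enough to treat $A$ and $B$ uniformly is where the real work lies; the algebra of the recurrence is then routine.
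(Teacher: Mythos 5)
Your proof is correct: the structural facts you isolate --- each quadrant block of $A$ or $B$ feeds exactly two of the eight sub-calls, the single crossing reuse sits at level exactly $N/2$ (for $B$ despite three intervening complete sub-calls), and interior reuses recurse --- give the recurrence $RC(N,M) = 2\,RC(N/2,M)$ with boundary $RC(N,N/2)=1$, whose solution is $N/(2M)$, and your sanity check $\sum_M RC(N,M) = N-1$ matches the $N$ accesses per input element. The paper elides its own derivation of this lemma (deferring to supplementary material), but the tree-structure counting it gestures at is exactly the argument you have made rigorous, so this is essentially the intended approach rather than a different route.
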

With an understanding of the frequency of each type of reuse of data from $A$ and $B$ we can turn our focus to computing the values of the reuse distances. Our high level angle of attack for this will be to again decompose into temporaries vs. input data: we will derive separately the number of temporaries as well as the number of elements of $A$ and $B$ between two consecutive uses of an item from $A$ or $B$. Formally:
\begin{definition}[Reuse distance decomposition]
\label{def:decomp}
Let $F(i, j, N)$ represent the reuse distance of element $A[i,j]$ at an $LN$ reuse. Then
\begin{align*}
    F(i, j, N) = f_T(i,j,N) + f_{A,B}(i,((j-1) \hspace{0.1cm} \% \hspace{0.1cm} N)+1,N),
\end{align*}
where $f_T(i,j,N)$ is the number of unique temporaries within the $A[i,j]$ $LN$ reuse pair, and $f_{A,B}(i,j,N)$ is the number of unique elements of matrices $A,B$ within the same reuse pair. Additionally,
\begin{align*}
    G(i, j, N) = g_T(i,((j-1) \hspace{0.1cm}\%\hspace{0.1cm}N)+1,N) + g_{A,B}(i,j,N),
\end{align*}
for matrix B.
\end{definition}
We will now derive the four summed terms in the Definition \ref{def:decomp} (two in $F$, two in $G$).
\subsubsection{Matrix A}
We again elide the full derivations of $f_T$ and $f_{A,B}$ but point the interested reader to \textbf{github.com/anon}. The results are as follows:
\begin{lemma}[Matrix A temporaries]
\label{lemma:a_temp}
Let $f_T(i,j,N)$ be the number of unique temporaries within an $A[i,j]$ $LN$ reuse pair. Then
\begin{align*}
\begin{aligned}
    f_T(i,j,N) = T_N + 2N^2 + 8T_{\frac{N}{2}} - \sum_{i=0}^{log_2(N) - 1}2T_{2^{k}}\\ + \sum_{k=0}^{log_2(N) + 1} (2^k)^2\cdot I(((j-1 \hspace{0.1cm} \% \hspace{0.1cm} 2N) \hspace{0.1cm} \% \hspace{0.1cm} 2^{k+1}) + 1 > 2^k) \\ + \sum_{k=0}^{log_2(N)} (2^k)^2\cdot I(((i-1 \hspace{0.1cm} \% \hspace{0.1cm} N) \hspace{0.1cm} \% \hspace{0.1cm} 2^{k+1}) + 1 > 2^k)
    \end{aligned}
\end{align*}
\end{lemma}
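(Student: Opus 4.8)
The plan is to locate the use--reuse window of $A[i,j]$ inside the depth-first execution trace and then count the distinct temporary locations touched within it, leaning on the tree structure of Definition~\ref{def:level} together with the counts of Definitions~\ref{def:tempcount} and~\ref{def:nodecount}. Because basic RMM allocates a fresh temporary in every node and never recycles them, each temporary is a distinct location, so counting \emph{unique} temporaries in the window reduces to counting temporaries \emph{created} in the nodes the window covers, plus a small correction for those created just outside the window but first touched inside it. First I would fix the two leaf accesses to $A[i,j]$ whose separation is an $LN$ reuse: by Definition~\ref{def:level} these are consecutive appearances such that the largest complete subtree strictly between them is an $LN$ call and no complete $L(2N)$ call fits. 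This pins the window as a partial traversal suffix after the first use, one complete $LN$ subtree, and a partial traversal prefix up to the second use; Lemma~\ref{lemma:abcount} confirms that exactly one such pair occurs per level, so the window is well defined.

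With the window fixed, the bulk of the formula follows from complete-subtree accounting. The single complete $LN$ call contributes exactly $T_N$ temporaries, the leading term. The terms $2N^2 + 8T_{N/2}$ and the geometric correction $-\sum_k 2T_{2^k}$ I would obtain from the addition-group structure: each $LN$ node performs four additions, each combining two sibling $L(N/2)$ results, so the temporaries of a first operand are produced and consumed before the second sibling runs. Climbing the path that joins the two accesses, at each level the window either includes or excludes the sibling block; summing the included blocks over the $\log_2 N$ levels produces the $8T_{N/2}$ contribution and the per-level $2N^2$ additions, while the interior subtrees that are double counted along the way are removed by $\sum_k 2T_{2^k}$. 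Here Definition~\ref{def:nodecount} supplies the multiplicity of each $LX$ node and Definition~\ref{def:tempcount} converts node counts into temporary counts.

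The delicate part is the two indicator sums, which I would handle by induction on recursion depth, tracking level by level whether the first and second uses of $A[i,j]$ descend into the same half or opposite halves of the current split. The modular expressions are exactly bit tests: $((j-1 \bmod 2N) \bmod 2^{k+1}) + 1 > 2^k$ holds precisely when bit $k$ of the wrapped column offset is set, and analogously for the row offset in $i$. When that bit is set the element lies in the upper half of the level-$k$ subdivision, forcing an additional $2^k \times 2^k$ sibling block, hence $(2^k)^2$ extra temporaries, to fall inside the window; when it is clear, no such block is added. Running this test independently down the row index $i$ and the column index $j$ yields the two separate sums, and their asymmetry (range $2N$ and $\log_2 N + 1$ terms for $j$ versus range $N$ and $\log_2 N$ terms for $i$) reflects that $A[i,j]$ is reused across twice as many column positions of the traversal as row positions, so the column offset unfolds one extra level.

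The main obstacle is exactly this bit-pattern induction: proving that the window boundary, as the path climbs the tree, aligns with the set bits of $i$ and $j$ requires getting every off-by-one right (the $j-1$ and $i-1$ shifts, the $+1$ inside each indicator, and the $2N$ versus $N$ ranges) and correctly classifying the boundary temporaries created inside one partial traversal but first read in the other. I would anchor the induction at $N=1$, where both sums are empty and $f_T$ collapses to the immediate temporary count, and discharge the inductive step by relating the level-$N$ window to the level-$N/2$ window one recursion deeper. The empirical instrumentation referenced in the text then serves as a consistency check that the accumulated constants $2N^2$, $8T_{N/2}$, and the sign of the geometric correction are assembled correctly.
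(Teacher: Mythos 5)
Your plan gets the right setting---the window between the two leaf accesses consists of the tail of the subtree containing the first use, one complete $LN$ call, and the head of the subtree containing the second use, and since basic RMM never frees or recycles temporaries, counting unique temporaries reduces to counting allocations inside that window---but the entire mathematical content of the lemma lies in the part you defer. The two indicator sums, their asymmetric ranges ($\log_2(N)+1$ terms with a $(j-1)\bmod 2N$ wrap for the column index versus $\log_2(N)$ terms for the row index), and the fixed terms $2N^2 + 8T_{\frac{N}{2}} - \sum_{k} 2T_{2^k}$ are precisely what must be proved, and your proposal stops at announcing that a ``bit-pattern induction'' would do it while conceding that every off-by-one in that induction is unresolved; a plan that halts at the hard step is not a proof. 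Two specific soft spots: first, your term-by-term attribution is suspect, because by Definition~\ref{def:tempcount} one has $T_N = N^2 + 8T_{\frac{N}{2}}$, so the fixed part of the formula equals $2T_N + N^2 - \sum_{k} 2T_{2^k}$, i.e., roughly \emph{two} complete $LN$ calls' worth of temporaries; that does not obviously fall out of your accounting of ``one complete $LN$ call plus included sibling blocks along the climbing path,'' and you never verify that it does. Second, well-definedness is glossed over: Lemma~\ref{lemma:abcount} gives $RC(N,M) = \frac{N}{2M}$ reuse pairs at level $M$ per element across the execution, not one, so before $f_T(i,j,N)$ can even be written as a function of $(i,j,N)$ alone you need a symmetry argument that all of these windows contain the same number of unique temporaries; you assert this rather than prove it.

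For comparison: the paper itself contains no proof of this lemma. It explicitly elides the derivation, describes it as a mix of purely analytical analysis of RMM's tree structure and pattern extraction from empirical results of an instrumented execution, and establishes correctness only by verifying the assembled distribution of Theorem~\ref{thm:rds} against instrumented traces up to $256\times 256$. So the purely analytical, inductive derivation you outline would, if completed, be \emph{stronger} than what the paper provides; as written, however, it reproduces the same gap the paper leaves---the exact formula is posited and then checked empirically rather than derived---while adding structural claims of its own that remain unverified.
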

For $f_{A,B}$, we isolate the following recursive relationship (where the recursion is between parents and children in the tree) as being essential:
\begin{align*}
\begin{bmatrix}
    \delta_1\\\delta_2
\end{bmatrix} 
\rightarrow
\begin{bmatrix}
    \delta_1&\delta_1 + N^2\\N^2 & 2\cdot N^2\\2 \cdot N^2 & N^2\\\delta_2 + N^2 & \delta_2
\end{bmatrix}
\end{align*}
Here, each matrix entry is itself a $\frac{N}{2} \times N$ matrix. Lemma \ref{lemma:a_a} reflects this recursion, with the eight terms corresponding to the eight sections of the second matrix. The indicator function calls isolate which section of the matrix $(i,j)$ fall in, and the four recursive calls correspond to the four sections that contain a $\delta$ term.
\begin{lemma}
\label{lemma:a_a}
Let $f_{A,B}(i,j,N)$ be the number of unique items in matrices $A$ and $B$ within an $A[i,j]$ $LN$ reuse pair.
\begin{align*}
    \begin{aligned}
        f_{A,B}(i,j,N) = 4N^2 + f'_{A,B}(i,j,N)
    \end{aligned}
\end{align*}
where
\begin{align*}
\scriptstyle
    f'_{A,B}(i,j,N) = 4\cdot N^2 + (\frac{N}{2})^2\cdot I(\frac{N}{4} < i \leq \frac{N}{2}, j \leq \frac{N}{2})
    +\hspace{0.1cm} 2(\frac{N}{2})^2\cdot I(\frac{N}{4} < i \leq \frac{N}{2}, \frac{N}{2} < j \leq N)\\
    \scriptstyle +\hspace{0.1cm} (\frac{N}{2})^2\cdot I(\frac{N}{2} < i \leq \frac{3N}{4}, \frac{N}{2} < j \leq N)
    +\hspace{0.1cm} 2(\frac{N}{2})^2\cdot I(\frac{N}{2} < i \leq \frac{3N}{4}, j \leq \frac{N}{2})\\
    \scriptstyle +\hspace{0.1cm} I(i > \frac{3N}{4}, j \leq \frac{N}{2})\cdot((\frac{N}{2})^2 + f_{A,B}(i - \frac{N}{2}, j, \frac{N}{2}))
    +\hspace{0.1cm} I(i > \frac{3N}{4}, j > \frac{N}{2})\cdot(f_{A,B}(i - \frac{N}{2}, j - \frac{N}{2}, \frac{N}{2}))\\
    \scriptstyle +\hspace{0.1cm} I(i \leq \frac{N}{4}, j > \frac{N}{2})\cdot((\frac{N}{2})^2 + f_{A,B}(i, j - \frac{N}{2}, \frac{N}{2}))\\
    \scriptstyle +\hspace{0.1cm} I(i \leq \frac{N}{4}, j \leq \frac{N}{2})\cdot(f_{A,B}(i, j, \frac{N}{2}))\\
\end{align*}
and
\begin{align*}
    f'_{A,B}(i,j,2) = \begin{bmatrix}
    0&1\\2&1
    \end{bmatrix}
\end{align*}
\end{lemma}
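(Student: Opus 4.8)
The plan is to prove the formula by induction on the level $N$ (a power of two), using the self-similar structure of the RMM execution tree from Figure \ref{fig:tree1} and treating the stated parent--child recursion box as the inductive invariant. The quantity $f_{A,B}(i,j,N)$ counts the distinct elements of $A$ and $B$ appearing in the trace window of a level-$N$ reuse of $A[i,j]$, where ``level-$N$'' is the largest-complete-$LN$-call notion of Definition \ref{def:level}; so the whole argument reduces to a careful accounting of which input elements are touched between the two relevant uses, organized by where $(i,j)$ lands in the operand.

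First I would settle the base case $N=2$ by directly simulating the $L2$ call: it expands into eight scalar ($L1$) multiplies in the fixed depth-first, left-to-right order, and for each of the four positions $(i,j)\in\{1,2\}^2$ I would identify the two consecutive leaves that touch $A[i,j]$ with a complete intervening $L1$ call, then count the distinct $A,B$ operands strictly between them. This should reproduce the boundary matrix $f'_{A,B}(i,j,2)=\begin{bmatrix}0&1\\2&1\end{bmatrix}$ and anchor the induction.

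For the inductive step I would fix an $LN$ call, assume the lemma for level $\frac{N}{2}$, and partition the index $(i,j)$ according to which of the four $\frac{N}{2}\times\frac{N}{2}$ operand blocks $A_{11},A_{12},A_{21},A_{22}$ contains it, refined by the row quarters $i\le \frac N4$, $\frac N4<i\le\frac N2$, $\frac N2<i\le \frac{3N}{4}$, $i>\frac{3N}{4}$ and the column halves. For each region I would locate the two child subtrees whose boundary realizes the level-$N$ reuse, read off from the fixed execution order which sibling subtrees fall inside the window, and count their distinct contributions: each intervening full $L\frac N2$ subtree supplies $(\frac N2)^2$ fresh elements of its $A$-operand and $(\frac N2)^2$ of its $B$-operand, which is where the flat terms $(\frac N2)^2$ and $2(\frac N2)^2$ and the always-present constant $8N^2=4N^2+4N^2$ originate. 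In the two outer row-quarters ($i\le\frac N4$ and $i>\frac{3N}{4}$) the level-$N$ window properly contains the element's level-$\frac N2$ window inside a single child, so the induction hypothesis supplies the recursive terms $f_{A,B}(\cdot,\cdot,\frac N2)$; in the inner quarters no such nesting occurs and the count is purely the flat sibling contribution. Matching the eight resulting cases against the eight blocks of the recursion box $\begin{bmatrix}\delta_1&\delta_1+N^2\\N^2&2N^2\\2N^2&N^2\\\delta_2+N^2&\delta_2\end{bmatrix}$ (with $N^2$ rescaled to $(\frac N2)^2$) then closes the step.

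The main obstacle will be the inductive step's combinatorial bookkeeping: establishing, without double counting, exactly which distinct input elements lie in each window, and in particular justifying the asymmetric partition (quarters in $i$, halves in $j$) and showing that the largest-complete-$LN$-call boundary of Definition \ref{def:level} aligns so that the level-$N$ window decomposes as a single level-$\frac N2$ window plus disjoint sibling contributions precisely in the outer quarters. I would verify the intervening-subtree element counts using Lemma \ref{lemma:abcount} together with the block-disjointness of the operands touched by distinct sibling calls, and cross-check the resulting closed form against an instrumented trace, as the authors suggest for the elided derivations.
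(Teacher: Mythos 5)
There is no proof in the paper to compare yours against: the authors explicitly elide the derivations of $f_T$ and $f_{A,B}$ (pointing only to an external repository), describe the companion derivations as a mix of analytical tree analysis and pattern extraction from instrumented executions, and certify correctness only empirically, by checking that the assembled distribution of Theorem \ref{thm:rds} matches instrumented traces up to $256\times256$. Your structural induction---base case by direct simulation of an $L2$ call, inductive step matching the eight regions of the recursion box, with complete-sibling contributions read off the depth-first order and nested windows handled by the hypothesis---is therefore a genuinely different route, and the right one in spirit: it is the rigorous completion of the analytical half of the authors' program, and it would upgrade the lemma from ``verified up to $256\times256$'' to proved for all $N$. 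Your base case will indeed reproduce the relative pattern $\left[\begin{smallmatrix}0&1\\2&1\end{smallmatrix}\right]$, though the absolute offsets depend on micro-conventions the paper never fixes (the order in which a leaf reads its $A$ and $B$ operands, and whether the reused element itself is counted).

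Two gaps need repair before the induction can close. First, your step is framed one level too low: by Definition \ref{def:level}, an $LN$ reuse has a complete $LN$ call strictly between use and reuse, so the reuse pair lives inside an $L(2N)$ call---the window runs from the tail of one child call, across exactly one complete sibling, into the head of a later child---not inside an $LN$ call partitioned into its four $\frac{N}{2}\times\frac{N}{2}$ operand blocks as you assume. Second, and more fundamentally, the quantity your plan computes cannot equal the printed formula for $N\geq 4$. The window just described touches only the blocks $A_{11}, A_{12}, B_{11}, B_{21}, B_{12}$ of the enclosing $L(2N)$ call, hence at most roughly $5N^2$ distinct input elements; but the stated $f_{A,B}$ already contains $8N^2$ from its two flat $4N^2$ terms before any indicator or recursive term is added. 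So a faithful count of distinct $A,B$ items in the window---which is exactly what your induction proves---comes out strictly smaller than the stated $f_{A,B}$, and the inductive step will not match the lemma term by term. What is actually pinned down by the authors' verification is the composite $F = f_T + f_{A,B}$ of Definition \ref{def:decomp}; the split between the two summands is bookkeeping, with input-independent mass evidently shifted between $f_T$ and $f_{A,B}$. To succeed you should either run the induction on the composite $F$ (proving it equals the true reuse distance of an $LN$ reuse), or first reverse-engineer the authors' accounting conventions from an instrumented trace, as you propose for cross-checking, and restate the lemma in the form your invariant actually satisfies.
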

\subsubsection{Matrix B}
The quantities of interest for matrix $B$ have very similar forms:
\begin{lemma}[Matrix B temporaries]
\label{lemma:b_temp}
Let $g_T(i,j,N)$ be the number of unique temporaries within an $B[i,j]$ $LN$ reuse pair. Then
\begin{align*}
\begin{aligned}
    g_T(i,j,N) = 4 \cdot T_N + 2N^2 - \sum_{k = 0}^{log_2(N) - 1}4\cdot T_{2^k} + \\\sum_{k=0}^{\lceil log_2(N)\rceil} 4^k \cdot I(((i -1)\hspace{0.1cm} \% \hspace{0.1cm} 2^{k+1}) + 1 > 2^k) \hspace{0.2cm} \\+ \sum_{k=0}^{\lceil log_2(N)\rceil} 4^k \cdot I(((j-1) \hspace{0.1cm}\% \hspace{0.1cm}2^{k+1}) + 1 > 2^k)
    \end{aligned}
\end{align*}
\end{lemma}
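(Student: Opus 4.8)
The plan is to derive $g_T(i,j,N)$ in the same style as the Matrix~$A$ analog (Lemma~\ref{lemma:a_temp}): locate the $B[i,j]$ reuse pair inside the depth-first execution trace, split the temporaries falling strictly between the two uses into a \emph{complete-subtree} contribution that is independent of $(i,j)$ and a \emph{fringe} contribution that depends on where $(i,j)$ sits inside its block, and then sum each part over the levels of the recursion. I would establish the closed form by induction on $N$, using Definition~\ref{def:tempcount} for $T_N$, Definition~\ref{def:nodecount} for the per-level node counts, and Definition~\ref{def:level} to fix the endpoints of the reuse window; the empirically extracted pattern then serves only as a guide for the induction hypothesis and as a final validation check against the instrumented trace.

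First I would pin down the reuse window. By Definition~\ref{def:level}, an $LN$ reuse of $B[i,j]$ is a pair of leaves whose least common ancestor is an $L(2N)$ node, arranged so that a complete $LN$ child of that ancestor lies wholly between them but no complete $L(2N)$ call does. Unlike matrix~$A$, whose two uses sit in nearly adjacent children, the two uses of a $B$ element are separated by several complete children, because $B[k,j]$ is reused across the \emph{rows} of the output (varying $i$), which the left-to-right traversal visits far apart. This row-wise reuse pattern is exactly what I expect to force the final formula to be symmetric in $i$ and $j$, in contrast to the asymmetric roles they play in Lemma~\ref{lemma:a_temp}.

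Next I would count the complete-subtree contribution. Descending from the least common ancestor toward the leaf, at each level a fixed number of sibling subtrees are crossed in full; a crossed subtree of side $2^k$ contributes $T_{2^k}$ temporaries (Definition~\ref{def:tempcount}), and the node counts of Definition~\ref{def:nodecount} fix how many are crossed at each level. Aggregating these over all levels of the descent and telescoping produces the position-independent base $4\cdot T_N + 2N^2 - \sum_{k=0}^{\log_2 N - 1} 4\cdot T_{2^k}$, where the $+2N^2$ absorbs the temporaries introduced by the addition step at the reuse level and the subtracted sum removes the parts of each level's subtree that are only partially traversed and therefore must be handed off to the fringe accounting rather than counted as complete.

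The fringe contribution is the step I expect to be the main obstacle. The tail of the first use's subtree and the head of the second use's subtree are a suffix and a prefix of a DFS traversal, not clean subtrees, so their temporary counts must be read off level by level. The claim to establish is that at recursion level $k$ the row coordinate contributes an extra block of $4^k$ temporaries precisely when the leaf lies in the lower half of the current $2^{k+1}$-block, i.e. when $((i-1)\bmod 2^{k+1}) + 1 > 2^k$, and symmetrically for the column coordinate $j$; summing over $k$ then yields the two indicator sums. The delicate points are (i) aligning the recursion's halving with the modular indicator so that boundary cases, where $i$ or $j$ lands exactly on a half-block edge, are assigned consistently; (ii) verifying that the $i$-sum, the $j$-sum, and the base term partition the inter-use temporaries with no double counting; and (iii) confirming that the $i$/$j$ symmetry genuinely holds for $B$ at every level rather than only at the top. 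I would close with the induction step, checking that substituting $N/2$ reproduces the same decomposition on each child subtree, and finally validate the resulting closed form numerically against the instrumented RMM trace, as is done elsewhere in this section.
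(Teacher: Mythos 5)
Your proposal cannot be checked against the paper's own proof step by step, because the paper does not contain one: the derivation of Lemma~\ref{lemma:b_temp} (like those of Lemmas~\ref{lemma:temp1}, \ref{lemma:temp2}, \ref{lemma:a_temp}, and \ref{lemma:a_a}) is explicitly elided, described only as ``a mix of purely analytical analysis of RMM's tree structure and pattern extraction from empirical results,'' and its correctness is established not by proof but by implementing the resulting model (Theorem~\ref{thm:rds}) and checking functional equivalence against instrumented traces up to $256\times256$. Measured against that, your plan is faithful to the paper's methodology, and your structural claims check out: an $LN$ reuse of a $B$ element does have its least common ancestor at an $L(2N)$ node; the two uses of a $B$ submatrix are separated by three complete $LN$ children (children 1 and 5 of the $L(2N)$ call) versus one for $A$ (children 1 and 3), which is why the position-independent base is roughly four subtrees' worth, $4T_N$, plus addition-level temporaries, and why $g_T$ exceeds $f_T$; and the base-plus-fringe split matches the shape of the stated formula. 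As a sanity check, for $i=j=1$ both indicator sums vanish, and a hand trace of the $L2$ reuse of $B[1,1]$ in a $4\times4$ multiplication gives a temporary count consistent with $4T_2 + 2\cdot2^2 - 4T_1 = 52$.

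The gap is that the proposal stops exactly where the lemma's content lies. The nontrivial part of the formula is the pair of fringe sums $\sum_k 4^k\cdot I\bigl(((i-1)\,\%\,2^{k+1})+1>2^k\bigr)$ and the symmetric $j$ term; your step on the fringe asserts what must be proved --- that the suffix of the first child's traversal and the prefix of the fifth child's traversal contribute exactly one block of $4^k$ temporaries per level, gated by which half of each $2^{k+1}$ block the coordinate falls in, with no double counting against the base term --- but carries out none of the counting. The induction framing does not discharge this by itself: the inductive step needs to know precisely how the fringe of an $L(2N)$ window decomposes into the fringe of an $LN$ window plus a correction, and that correction is the very quantity in question. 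So as written, your argument sits at the same evidentiary level as the paper's own treatment (plausible structure plus promised validation) rather than constituting a completed proof; to go beyond the paper you would need to execute the level-by-level fringe count, or else adopt the paper's fallback of exhaustive verification of the implemented model against instrumented traces.
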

Again, we extract the key recursive relationship between reuse distances in parents and children in the tree:
\begin{align*}
\begin{bmatrix}
    \delta_1&\delta_2
\end{bmatrix}
\rightarrow
\begin{bmatrix}
    \delta_1 & 2N^2 & 6N^2 & \delta_2 + (2N)^2\\
    \delta_1 + (2N)^2 & 6N^2 &  2N^2 &\delta_2
    \end{bmatrix}
\end{align*}
where each entry represents an $N \times \frac{N}{2}$ matrix. Lemma \ref{lemma:b_a} again reflects this recursion:
\begin{lemma}
\label{lemma:b_a}
Let $g_{A,B}(i,j,N)$ be the number of unique items in matrices $A$ and $B$ within an $B[i,j]$ $LN$ reuse pair. Then
\begin{align*}
    \begin{aligned}
        g_{A,B}(i,j,N) = 6N^2 + g'_{A,B}(i,j,N)
    \end{aligned}
\end{align*}
where
\begin{align*}
\begin{aligned}
    \scriptstyle  g'_{A,B}(i,j,N) = \frac{N^2}{2}\cdot I(i \leq N, \frac{N}{2} < j \leq N) + \hspace{0.1cm} (\frac{3N^2}{2})\cdot I(i \leq N, N < j \leq \frac{3N}{2}) \\
    \scriptstyle+ \hspace{0.1cm} (\frac{3N^2}{2})\cdot I(N < i \leq 2N, \frac{N}{2} < j \leq N) + \hspace{0.1cm} (\frac{N^2}{2})\cdot I(N < i \leq 2N, N < j \leq \frac{3N}{2}) \\
    \scriptstyle+ \hspace{0.1cm} I(i \leq N, j \leq \frac{N}{2})\cdot g'_{A,B}(i, j, \frac{N}{2})+ \hspace{0.1cm} I(N < i \leq 2N, j \leq \frac{N}{2})\cdot(N^2 + g'_{A,B}(i - N, j, \frac{N}{2}))\\
    \scriptstyle+ \hspace{0.1cm} I(i \leq N, \frac{3N}{2} < j \leq 2N)\cdot(N^2 + g'_{A,B}(i, j - N, \frac{N}{2}))\\\scriptstyle+ \hspace{0.1cm} I(N < i \leq 2N, \frac{3N}{2} < j \leq 2N)\cdot g'_{A,B}(i - N, j - N, \frac{N}{2})\\
    \end{aligned}
\end{align*}
and
\begin{align*}
    g'_{A,B}(i,j,N) = \begin{bmatrix}
    1&1\\2&0
    \end{bmatrix}
\end{align*}
\end{lemma}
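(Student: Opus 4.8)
The plan is to prove the identity by induction on $N$ over powers of two, using the self-similar block recursion stated immediately before the lemma as the inductive engine and the additive split $g_{A,B}(i,j,N) = 6N^2 + g'_{A,B}(i,j,N)$ to separate the position-independent part of the count from the position-dependent remainder $g'_{A,B}$. The base case is $N=1$, where the domain of $(i,j)$ is the $2\times 2$ grid $\{1,2\}\times\{1,2\}$; here I would directly trace an $L1$ reuse pair of $B[i,j]$ for each of the four positions, count the distinct elements of $A$ and $B$ accessed in the intervening window, and verify that the position-dependent part equals $\begin{bmatrix}1&1\\2&0\end{bmatrix}$.

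The heart of the argument is justifying the block transformation
\begin{align*}
\begin{bmatrix}
    \delta_1&\delta_2
\end{bmatrix}
\rightarrow
\begin{bmatrix}
    \delta_1 & 2N^2 & 6N^2 & \delta_2 + (2N)^2\\
    \delta_1 + (2N)^2 & 6N^2 &  2N^2 &\delta_2
\end{bmatrix}
\end{align*}
from RMM's execution semantics. Fixing an $L(2N)$ reuse pair of some $B[i,j]$, I would use the left-to-right depth-first traversal order together with the eight-way decomposition in the pseudocode to pin down exactly which leaf accesses fall in the window between the two uses. The key structural fact is that each block of $B$ is consumed by two distinct recursive calls (for instance $B_{11}$ appears in both the $C_{11}$ and $C_{21}$ computations), so the window straddling a complete $L(2N)$ subtree decomposes into regions that are scaled copies of smaller windows. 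In the transformation, the $\delta_1$ and $\delta_2$ entries are exactly the recursive contributions $g'_{A,B}(\cdot,\cdot,\tfrac{N}{2})$ evaluated on the shifted index halves, while the flat entries $2N^2$, $6N^2$, and $(2N)^2$ count the freshly accessed, previously uncounted elements of $A$ and $B$ in the non-recursive regions.

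The remaining work is the region-by-region counting that produces those constants. For each of the eight blocks I would enumerate the submatrices of $A$ and $B$ touched for the first time inside the window and take their distinct-element counts, subtracting overlaps with elements counted earlier in execution order. After the reindexing that identifies the transformation's $N$ with the lemma's $\tfrac{N}{2}$, the block constants $2N^2$, $6N^2$, $(2N)^2$ become exactly the region constants $\tfrac{N^2}{2}$, $\tfrac{3N^2}{2}$, $N^2$ that appear gated by indicator functions in the closed form, and the four $\delta$-bearing blocks map to the four indicator-gated recursive terms. Translating the block picture into the indicator form is then mechanical, and matching it against the recursion closes the induction; the parallel structure with the matrix-$A$ result in Lemma~\ref{lemma:a_a} serves as a sanity check throughout.

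The main obstacle is the distinctness bookkeeping in the counting step. Because blocks of $B$ (and the shared blocks of $A$) are revisited across sibling recursive calls, naively summing per-block element counts overcounts, and each constant in the closed form is precisely the overlap-corrected count; getting this subtraction right, and in particular verifying that the corrected counts are genuinely self-similar so the recursion reproduces itself at every level, is the delicate part and the one most prone to off-by-$\tfrac{N}{2}$ errors. I would guard against this by validating the derived formula against an instrumented execution of RMM, as the paper does elsewhere.
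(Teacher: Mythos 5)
Your plan is sound, but it is not ``the same approach as the paper'' for a simple reason: the paper never proves Lemma~\ref{lemma:b_a}. The text explicitly elides the derivation, describing it as a mix of analytical reasoning about the tree structure and \emph{pattern extraction from empirical results} of an instrumented RMM, and its correctness claim for $g_{A,B}$ rests on the empirical verification of Theorem~\ref{thm:rds} (identical distributions up to $256\times256$), with details deferred to an external repository. What you propose---induction on $N$ over powers of two, with the block transformation justified from the depth-first traversal semantics and the base case checked by direct trace---is a genuine analytical proof that the paper does not attempt. Your consistency arithmetic is correct: under the reindexing that identifies the transformation's $N$ with the lemma's $\tfrac{N}{2}$, the entries $2N^2$, $6N^2$, $(2N)^2$ become exactly $\tfrac{N^2}{2}$, $\tfrac{3N^2}{2}$, $N^2$, and the eight blocks map one-to-one onto the eight indicator-gated terms; your reading of the unsubscripted base case as $N=1$ on the domain $\{1,2\}\times\{1,2\}$ is the only one consistent with the recursion's index ranges. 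What each route buys: the paper gets precise constants cheaply but with only empirical assurance; your route would upgrade the lemma to a theorem with a proof, at the cost that the step you yourself flag as delicate---the overlap-corrected distinct-element counting, including why the position-independent offset is exactly $6N^2$ and why it propagates correctly from level $\tfrac{N}{2}$ to level $N$---is deferred rather than done, and that is where essentially all of the work lives.

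One concrete caution: your structural premise that each block of $B$ is consumed by exactly two recursive calls holds for standard RMM but \emph{not} for the paper's pseudocode as printed, which (evidently a typo) computes $C_{21}$ and $C_{22}$ so that $B_{12}$ is used three times and $B_{21}$ only once. Under the printed code the base case fails outright ($B_{21}$ then has no $L1$ reuse, and the $B_{22}$ count comes out below $6N^2$), whereas under standard RMM a direct trace of the $2\times2$ multiplication gives unique-item counts $7,7,8,6$ for $B_{11},B_{12},B_{21},B_{22}$, i.e.\ exactly $6N^2$ plus $\left[\begin{smallmatrix}1&1\\2&0\end{smallmatrix}\right]$. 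Your proof should therefore state explicitly that it analyzes standard RMM; the lemma is true only for that version, and silently correcting the pseudocode, as your proposal implicitly does, should be made explicit.
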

\subsection{Final Distribution}
With Definitions 1-4 and Lemmas 1-8, we have an understanding of both the values of the reuse distances that correspond to nodes in our tree decomposition and their frequencies. We then arrive at the complete specification for a reuse distance distribution (i.e. miss ratio curve):
\begin{theorem}[Reuse Distance Multiset]
\label{thm:rds}
The multiset of all reuse distances in an execution of $NxN$ recursive matrix multiplication can be expressed as follows:
\begin{align*}
    \begin{aligned}
        RD_N = \bigcup_{l \in \{1,2,4...N\}}\bigcup_{(i,j,k) \in \{1..L\} \times \{1..L\} \times \{1,2\}} \underbrace{\{DT_{k,l}(i,j)...DT_{k,l}(i,j)\}}_{\frac{\#Ll}{2}}
        \\\cup\hspace{0.2cm}\bigcup_{l \in \{1,2,4...N\}}\bigcup_{(i,j) \in \{1..L\} \times \{1..2L\}}\underbrace{\{F(i,j,l) ... F(i,j,l)\}}_{\frac{N}{2l} \cdot \frac{N^2}{2l^2}}
        \\\cup\hspace{0.2cm}\bigcup_{l \in \{1,2,4...N\}}\bigcup_{(i,j) \in \{1..2L\} \times \{1..2L\}}\underbrace{\{G(i,j,l) ... G(i,j,l)\}}_{\frac{N}{2l}\cdot \frac{N^2}{4l^2}}
    \end{aligned}
\end{align*}
where $DT_{k,l}(i,j), F(i,j,N),G(i,j,N)$ and $\#L_l$ are defined in earlier lemmas and definitions.
\end{theorem}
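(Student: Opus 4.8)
The plan is to prove the multiset identity by partitioning every memory access made during the execution into three mutually exclusive and collectively exhaustive classes — accesses to dynamically allocated temporaries, accesses to elements of the input matrix $A$, and accesses to elements of the input matrix $B$ — and then showing that each class contributes exactly one of the three unions on the right-hand side. Because the classes are disjoint and cover every access, the multiset of all reuse distances is the disjoint union of the three per-class sub-multisets, so it suffices to verify the three contributions separately. Throughout I would count only genuine reuses (finite reuse distances), discarding cold first-touch accesses, which carry no reuse distance.

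First I would handle the temporary class. Lemma \ref{lemma:symmetry} collapses the temporaries introduced at any level $l$ into at most two distinct reuse-distance patterns, whose explicit entries are supplied by $DT_{1,l}$ and $DT_{2,l}$ in Lemmas \ref{lemma:temp1} and \ref{lemma:temp2}; these are exactly the values ranged over by $(i,j,k)$ in the first union. To fix the multiplicity $\tfrac{\#Ll}{2}$, I would invoke Definition \ref{def:nodecount}, which gives $\#Ll = N^3/l^3$ nodes at level $l$, together with the observation that the nodes split evenly between the first and second positions of the addition groups, so that exactly $\#Ll/2$ of them realize each of the two patterns. A clean correctness check is that the induced count of temporary reuse distances, $\sum_{l} 2\cdot l^2 \cdot \tfrac{\#Ll}{2} = \sum_{l} N^3/l$, telescopes to $N^2(2N-1)=T_N$ from Definition \ref{def:tempcount}, matching the total number of temporaries (each written once and reused once).

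Next I would treat matrices $A$ and $B$ symmetrically. Definition \ref{def:decomp} reduces each input reuse distance to the sum of a temporary term and an input-data term, $F=f_T+f_{A,B}$ and $G=g_T+g_{A,B}$, whose closed forms are furnished by Lemmas \ref{lemma:a_temp}, \ref{lemma:a_a}, \ref{lemma:b_temp}, and \ref{lemma:b_a}; the periodicity encoded by the $\%\,N$ arguments is precisely what lets a finite index grid ($l\times 2l$ for $A$, $2l\times 2l$ for $B$) enumerate all distinct values at level $l$. The multiplicities then come from Lemma \ref{lemma:abcount}: a single input item undergoes $RC(N,l)=N/(2l)$ reuses at level $l$, and multiplying by the number of items sharing each representative index yields the stated weights $\tfrac{N}{2l}\cdot\tfrac{N^2}{2l^2}$ and $\tfrac{N}{2l}\cdot\tfrac{N^2}{4l^2}$. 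Again I would cross-check by summation: each level contributes $N^3/(2l)$ input reuses in either family, and $\sum_{l\in\{1,\dots,N/2\}} N^3/(2l) = N^3-N^2$, which equals the $N^3$ leaf accesses to $A$ (resp.\ $B$) minus the $N^2$ cold first-touches, confirming that nothing is double counted or omitted.

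The main obstacle will be the second step within each class: rigorously justifying that the finite representative grids, together with the stated multiplicities, tile the full set of reuses exactly. Establishing the individual values (the $DT$, $F$, $G$ entries) is delegated to the earlier lemmas, but proving that the collapsing induced by symmetry and periodicity is both complete (every reuse is represented) and non-redundant (no value carries the wrong weight) requires a careful bijection between physical reuse events and (level, representative index) pairs. The summation identities above are the decisive sanity checks: because the three per-class totals sum to $T_N + 2(N^3-N^2)$ and independently account for every non-cold access in the trace, agreement of these counts certifies that the equality of Theorem \ref{thm:rds} holds as a multiset, with correct multiplicities, rather than merely as a set.
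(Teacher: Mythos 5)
Your proposal is correct and takes essentially the same route as the paper: Theorem~\ref{thm:rds} is assembled there exactly as you describe, by partitioning accesses into temporaries, $A$, and $B$, and attaching the value lemmas (Lemmas~\ref{lemma:temp1}, \ref{lemma:temp2}, \ref{lemma:a_temp}, \ref{lemma:a_a}, \ref{lemma:b_temp}, \ref{lemma:b_a}) to the multiplicity counts coming from Definition~\ref{def:nodecount}, Lemma~\ref{lemma:abcount}, and the addition-group symmetry of Lemma~\ref{lemma:symmetry}. The completeness/non-redundancy obligation you flag is not discharged analytically in the paper either---it is delegated to empirical verification against instrumented traces up to $256\times256$---so your counting cross-checks (modulo the root level, whose output temporaries are written but never reused, making the physical temporary-reuse total $T_N - N^2$ rather than $T_N$) actually go slightly beyond the paper's own justification.
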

Theorem \ref{thm:rds} is a symbolic construction that represents RMM's reuse distance behavior for any input size as a multiset. When instantiated for a given execution, this data would normally take the form of a histogram, but as we are specifying a distribution symbolically we require this multiset construction. The elements of the set are reuse distance values, and their repetition denotes the multiplicity of that RD value.
\subsection{Algorithmic form}
Algorithm \ref{alg:rdd} contains a specification for how to compute a reuse distribution for RMM given the previous handful of mathematical results and an input size:
\begin{algorithm}
\caption{Reuse Distance Computation}\label{alg:rdd}
\begin{algorithmic}[1]
\Require $RD : N \rightarrow N$\Comment Dictionary. key:RD, value:count.
\Procedure{compute\_RMM\_RDD}{$N$}
\For{$l \in \{1,2,4...N\}$}
\For{$(i,j,k) \in \{1..L\} \times \{1..L\} \times \{1,2\} $}
\State {\color{blue}/* Lemmas \ref{lemma:symmetry}, \ref{lemma:temp1}, \ref{lemma:temp2}, Definition \ref{def:nodecount}*/}
\State $RD[DT_{k,l}(i,j)] \leftarrow RD[DT_{k,l}(i,j)] + \frac{\#Ll}{2}$
\EndFor
\For{$(i,j) \in \{1..L\} \times \{1..2L\} $}
\State {\color{blue}/* Lemmas \ref{lemma:abcount}, \ref{lemma:a_temp}, \ref{lemma:a_a}, Definition \ref{def:decomp} */}
\State $RD[F(i,j,l)] \leftarrow RD[F(i,j,l)] + \frac{N}{2l} \cdot \frac{N^2}{2l^2}$
\EndFor
\For{$(i,j) \in \{1..2L\} \times \{1..2L\} $}
\State {\color{blue}/* Lemmas \ref{lemma:abcount}, \ref{lemma:b_temp}, \ref{lemma:b_a}, Definition \ref{def:decomp} */}
\State $RD[G(i,j,l)] \leftarrow RD[G(i,j,l)] + \frac{N}{2l} \cdot \frac{N^2}{4l^2}$
\EndFor
\EndFor
\State \textbf{return} $RD$\Comment{Dictionary stores distribution}
\EndProcedure
\end{algorithmic}
\end{algorithm}

Algorithm \ref{alg:rdd} has runtime $O(n^2 \cdot log(n))$ for matrix dimension $n$, while collecting this data by running the program and performing trace analysis has runtime $O(n^3 \cdot log(n))$ \cite{Olken:LBL81}. Any profiling involving running the program must be $\Omega(n^3)$, demonstrating that our approach has guaranteed asymptotic improvement.
\subsection{Verification}
We verified Theorem \ref{thm:rds} by comparing its resultant RD distribution to an RD distribution formed by instrumenting RMM and performing trace analysis. The two distributions are verified to be identical up to size $256\times256$.
\section{Data Movement Distance Analyses}
In the following section we will derive bounds on the data movement distance (see Definition \ref{def:dmd}) incurred by several forms of matrix multiplication: naive, tiled, and recursive and Strassen's both with and without temporary reuse. In doing so, we demonstrate the following two valuable properties of DMD:
\begin{enumerate}
    \item DMD is capable of asymptotically differentiating algorithms with the same time complexity as a result of their memory behavior
    \item DMD is capable of asymptotically differentiating between versions of the same algorithm with and without locality optimizations
\end{enumerate}
We construct precise DMD values for naive MM, asymptotically tight upper and lower bounds (differing only in coefficient) for tiled and recursive MM, and upper bounds for Strassen's algorithm and recursive MM with memory management.
\subsection{Naive Matrix Multiplication}
For space, we elide the derivation of naive MM's DMD. However, it is quite straightforward. We derive first is reuse distance distribution, which is very simple, then apply Defition \ref{def:dmd} to sum the square roots of all its RDs. The result:
\begin{align*}
    DMD_{MM} = (n^3 \cdot \sqrt{2n}) + (n^3 - 2n^2 + n) \cdot \sqrt{n^2 + 2n}  \\+ (\sum_{i = 1}^{n - 1} 2n \cdot \sqrt{n^2 + n + i}) + (n \cdot \sqrt{n^2 + n})
\end{align*}
Simplifying asymptotically:
\begin{align*}
    DMD_{MM} \sim n^4
\end{align*}

\subsection{Tiled Matrix Multiplication}
Consider matrix multiplication with the computation reordered by partitioning the input matrices into $DxD$ tiles as follows, from \cite{BaoD:CGO13}:
\begin{verbatim}
    for(jj = 0; jj < N; jj = jj + D)
        for(kk = 0; kk< N; kk = kk + D)
            for(i = 0; i< N; i = i + 1)
                for(j = jj; j < jj + D; j = j + 1)
                    for(k = kk; k < kk + D; k = k + 1)
                        C[i][j] = C[i][j] + A[i][k]*B[k][j]
\end{verbatim}
For space, we elide the full derivation of tiled MM's DMD, but the approach is similar to that for naive matrix multiplication above. However, instead of deriving the precise reuse distance distribution, we form upper and lower bounded versions and use them to derive the corresponding DMD bounds. The interested reader can see the full derivation at \textbf{github.com/anon}.
\begin{theorem}[Tiled Matrix Multiplication DMD]
Upper and lower bounds on the data movement distance incurred by tiled matrix multiplication operating on $NxN$ matrices with $DxD$ tiles are as follows:
\begin{align*}
    \frac{N^4}{D} + N^3\cdot D < \sim(DMD_{TMM}) < 2\sqrt{3}\frac{N^4}{D} + \sqrt{2}N^3\cdot D
\end{align*}
\end{theorem}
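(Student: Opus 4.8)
The plan is to mirror the recipe used for naive matrix multiplication: classify every memory access by its reuse pattern, count how many accesses fall in each class, bound the reuse distance of each class from above and below by measuring the working set that separates a reuse pair, and then apply Definition~\ref{def:dmd} by summing $\sqrt{d_i}$ over all accesses, keeping leading coefficients as the $\sim$ notation requires. Since this loop nest accumulates in place and uses no temporaries, the only arrays are $A$, $B$, $C$, and the nest $jj,kk,i,j,k$ induces exactly five reuse classes to track: (i) the in-place accumulation of $C[i][j]$ across consecutive $k$, (ii) the reuse of $A[i][k]$ across consecutive $j$ within a tile, (iii) the reuse of $B[k][j]$ across consecutive $i$, (iv) the reuse of $C[i][j]$ across consecutive $kk$-tiles, and (v) the reuse of $A[i][k]$ across consecutive $jj$-tiles; plus the cold first access to each element of $B$.

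First I would fix the access counts. The innermost body runs $N^3$ times, so there are $\Theta(N^3)$ accesses to each of $A$ and $B$ and $\Theta(N^3)$ read/write accesses to $C$. Grouping by element: each $A[i][k]$ is touched $N$ times, partitioned into $N/D$ blocks of $D$ accesses (one block per $jj$-tile); each $B[k][j]$ is touched $N$ times, all inside the single $(jj,kk)$ pair that contains it; and each $C[i][j]$ is touched $\Theta(N)$ times spread across the $N/D$ values of $kk$. Next I would bound each class's reuse distance by counting the distinct $A$, $B$, $C$ elements in the separating window. The two short classes (i) and (ii) have windows of size $O(1)$ and $O(D)$ and contribute only $O(N^3)$ and $O(N^3\sqrt{D})$ to the sum. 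Class (iii)'s window is one $D\times D$ block of $B$ plus $O(D)$ stragglers from $A$ and $C$, giving a reuse distance between $D^2$ and $2D^2$; with $\Theta(N^3)$ such accesses this produces the $N^3 D$ term, and the square root of the $[D^2,2D^2]$ range yields exactly the coefficients $1$ and $\sqrt{2}$. Class (v)'s window is essentially all of $A$ together with a column strip of $B$ and of $C$, giving a reuse distance of $\Theta(N^2)$; with $\Theta(N^3/D)$ such accesses this produces the $N^4/D$ term, whose lower and upper coefficients come from bounding that window between $N^2$ and a constant multiple of $N^2$.

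Assembling the lower bounds on the $d_i$ gives the left inequality and the upper bounds give the right one, once I verify that the leftover classes are asymptotically negligible for the entire range $1\le D\le N$. This reduces to three clean comparisons: $N^3\sqrt{D}\le N^3 D$ (so class (ii) is absorbed by the $N^3 D$ term), the class-(iv) contribution $N^{3.5}/\sqrt{D}\le N^4/D$ (so it is absorbed by the $N^4/D$ term), and the $\Theta(N^2)$ cold accesses contributing only $\Theta(N^3)$. Note neither surviving term dominates globally: they cross at $D=\sqrt{N}$, where both equal $N^{3.5}$, which is exactly why both must be retained in the bound.

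The main obstacle is the reuse-distance bookkeeping at tile boundaries. For each element I must cleanly separate the single ``first access in a new tile'' (which belongs to the large class (v) or to the cold set) from the $D-1$ or $N-1$ genuine in-tile reuses, and I must confirm that the separating windows I tally neither double-count elements shared by $A$, $B$, and $C$ nor omit the partial rows and columns that accumulate at the edges of a $jj$- or $kk$-sweep. Pinning these windows down tightly enough to justify a constant-factor gap between the two bounds, rather than a looser order-of-magnitude statement, is where the real work lies; the asymptotic accounting in the previous paragraph is comparatively routine.
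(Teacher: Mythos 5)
Your proposal is correct and takes essentially the same route as the paper: the paper's (elided) derivation likewise forms upper- and lower-bounded versions of the reuse distance distribution by access class and sums square roots via Definition~\ref{def:dmd}, with $B$'s per-$i$ reuses (window in $[D^2, 2D^2]$) producing the $N^3 D$ term with coefficients $1$ and $\sqrt{2}$, and the cross-tile reuses (window $\Theta(N^2)$, bounded by a constant multiple of the total data size $3N^2$) producing the $N^4/D$ term. Your per-class access counts, the absorption of classes (i), (ii), (iv) and the cold misses, and the observation that the two surviving terms cross at $D=\sqrt{N}$ are all consistent with the stated bounds.
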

Note that for $D=1$ and $D=N$, the lower bound is exactly the asymptotic DMD of naive matrix multiplication, as tile sizes of 1 and $N$ result in the same computation order as naive MM.

\subsection{Recursive Matrix Multiplication}
\textbf{$F:$}
\medskip \newline Firstly, note that as $f_T(i,j,N) = \Omega(N^3)$ and $f_{A,B}(i,j,N) = O(N^2)$, $F(i,j,N) \sim f_T(i,j,N)$ (see Definition \ref{def:decomp}). Similarly, $G(i,j,N) \sim g_T(i,j,N)$. Noting this:
\begin{align*}
    f_T(i,j,N) \sim T_N + 8T_{\frac{N}{2}} - \sum_{k=0}^{log_2(N) - 1}2\cdot T_{2^{k}}
\end{align*}
Taking a lower bound on the summation to derive an upper bound on DMD:
\begin{align*}
    F(i,j,N) \sim 3N^3
\end{align*}
Upper bounding the summation for a lower bound:
\begin{align*}
    F(i,j,N) \sim 2N^3
\end{align*}
$G$ has a similar analysis:
\begin{align*}
    g_T(i,j,N) \sim 4\cdot T_N - \sum_{k=0}^{log_2(N) - 1}2\cdot T_{2^{k}}
\end{align*}
An upper bound here, bounding the summation again:
\begin{align*}
    G(i,j,N) \sim 7\cdot N^3
\end{align*}
The corresponding lower bound:
\begin{align*}
    G(i,j,N) \sim 6\cdot N^3
\end{align*}
\subsection{Accesses to temporaries}
We now turn our attention to accesses to temporaries. We will consider the four quadrants of  $DT_{1,N}, DT_{2,N}$ separately. First, we need the asymptotic behavior of $d_1, d_2, d_3, d_4, \phi, \delta, \omega, \gamma$:
\begin{align*}
    d_1 \sim \frac{7N^3}{2}, d_2 \sim 3N^3, d_3 \sim \frac{3N^3}{2}, d_4 \sim N^3\\
    \phi(N), \gamma(N), \omega(N), \delta(N) \sim N^3
\end{align*}
The distribution of asymptotic reuse distances for $DT_1$ and $DT_2$ are then as follows, by partitioning each into four quadrants and combining the asymptotic costs of the above:

\begin{align*}
    DT_{1,N} : \left\{
        \begin{array}{ll}
            1/4 & \frac{7N^3}{2}  \vspace{0.1cm} \\
            1/4 & \frac{5N^3}{2} \vspace{0.1cm}\\
            1/4 & 3N^3 \vspace{0.1cm}\\
            1/4 & 2N^3 
        \end{array}
    \right\|
    \hspace{1.0cm} DT_{2,N} : \left\{
        \begin{array}{ll}
            1/4 & \frac{3N^3}{2}  \vspace{0.1cm} \\
            1/4 & \frac{N^3}{2} \vspace{0.1cm}\\
            1/4 & N^3 \vspace{0.1cm}\\
            1/4 & 3N^2 
        \end{array}
    \right\|
\end{align*}
\subsection{DMD calculation} 

With asymptotic reuse distance for each type of memory access, we can now use the frequency information in Theorem \ref{thm:rds} to calculate DMD.
\subsubsection{Temporaries}
$DT_1:$
\begin{align*}
    DMD_{DT1} = \sum_{i=0}^{log_2(N)}\frac{N^3\cdot\sqrt{2(2^i)^3}}{4\cdot2^i} + \sum_{i=0}^{log_2(N)}\frac{N^3\cdot\sqrt{3(2^i)^3}}{4\cdot2^i} \\+ \sum_{i=0}^{log_2(N)}\frac{N^3\cdot\sqrt{\frac{7(2^i)^3}{2}}}{4\cdot2^i} + \sum_{i=0}^{log_2(N)}\frac{N^3\cdot\sqrt{\frac{5(2^i)^3}{2}}}{4\cdot2^i}\\
DMD_{DT1} \sim \frac{2 + \sqrt{5} + \sqrt{6} + \sqrt{7} + 2\sqrt{2} + 2\sqrt{3} + \sqrt{14} + \sqrt{10} }{4}N^{3.5}
\end{align*}
$DT_2:$
\begin{align*}
    DMD_{DT2} = \sum_{i=0}^{log_2(N)}\frac{N^3\cdot\sqrt{(2^i)^3}}{4\cdot2^i} + \sum_{i=0}^{log_2(N)}\frac{N^3\cdot\sqrt{\frac{(2^i)^3}{2}}}{4\cdot2^i} \\+ \sum_{i=0}^{log_2(N)}\frac{N^3\cdot\sqrt{\frac{3(2^i)^3}{2}}}{4\cdot2^i} + \sum_{i=0}^{log_2(N)}\frac{N^3\cdot\sqrt{3(2^i)^2}}{4\cdot2^i}\\
    DMD_{DT2} \sim \frac{3 + 2\sqrt{2} + \sqrt{3} + \sqrt{6}}{4}N^{3.5}\hspace{1.2cm}
\end{align*}
\subsubsection{Accesses to A and B}
First, we will consider the upper bound on $F$:
\begin{align*}
DMD^{up}_{F} = \sum_{i=0}^{log_2(N)}\sum_{j=1}^{2^i}\sum_{k=1}^{2^{i+1}}\frac{N\cdot\sqrt{3(2^i)^3}}{2\cdot2^i} = \sum_{i=0}^{log_2(N)} N \cdot 2^i \sqrt{3(2^i)^3}
\end{align*}
\begin{align*}
    DMD^{up}_{F} \sim \frac{4 \sqrt{6}}{4 \sqrt{2} - 1} N^{3.5}
\end{align*}
Next, the lower bound:
\begin{align*}
DMD^{low}_{F} = \sum_{i=0}^{log_2(N)}\sum_{j=1}^{2^i}\sum_{k=1}^{2^{i+1}}\frac{N\cdot\sqrt{2(2^i)^3}}{2\cdot2^i} = \sum_{i=0}^{log_2(N)} N \cdot 2^i \sqrt{2(2^i)^3}
\end{align*}
\begin{align*}
    DMD^{low}_{F} \sim \frac{8}{4 \sqrt{2} - 1} N^{3.5}
\end{align*}
Similarly for B:
\begin{align*}
DMD^{up}_{G} = \sum_{i=0}^{log_2(N)}\sum_{j=1}^{2^{i+1}}\sum_{k=1}^{2^{i}}\frac{N\cdot\sqrt{7(2^i)^3}}{2\cdot2^i} = \sum_{i=0}^{log_2(N)} N \cdot 2^i \sqrt{7(2^i)^3}
\end{align*}
\begin{align*}
    DMD^{up}_{G} \sim \frac{4 \sqrt{14}}{4 \sqrt{2} - 1} N^{3.5}
\end{align*}
Next, the lower bound:
\begin{align*}
DMD^{low}_{G} = \sum_{i=0}^{log_2(N)}\sum_{j=1}^{2^{i+1}}\sum_{k=1}^{2^{i}}\frac{N\cdot\sqrt{6(2^i)^3}}{2\cdot2^i} = \sum_{i=0}^{log_2(N)} N \cdot 2^i \sqrt{6(2^i)^3}
\end{align*}
\begin{align*}
    DMD^{low}_{G} \sim \frac{8\sqrt{3}}{4 \sqrt{2} - 1} N^{3.5}
\end{align*}
\subsubsection{Total DMD}
\begin{theorem}[Recursive Matrix Multiplication Data Movemement Distance]
Combining the previous, upper and lower bounds on the data movement distance incurred by a standard recursive matrix multiplication algorithm on $NxN$ matrices is as follows:
\begin{align*}
    12.82N^{3.5} < \sim(D_{RMM}(N)) < 13.46N^{3.5}
\end{align*}

\end{theorem}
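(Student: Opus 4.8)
The plan is to obtain $D_{RMM}(N)$ simply by summing the per-category DMD contributions already computed, exploiting the fact that Theorem~\ref{thm:rds} partitions the \emph{entire} access multiset of RMM into three disjoint, exhaustive families: the temporaries (whose reuse distances are enumerated by $DT_{1,N}$ and $DT_{2,N}$), the reuses of matrix $A$ (enumerated by $F$), and the reuses of matrix $B$ (enumerated by $G$). Since Definition~\ref{def:dmd} makes DMD a plain sum of $\sqrt{d_i}$ over all accesses, and the three families cover every access exactly once, the total decomposes additively:
\begin{align*}
    \sim(D_{RMM}(N)) = DMD_{DT1} + DMD_{DT2} + DMD_F + DMD_G.
\end{align*}
The first two terms were derived as exact leading-order coefficients, whereas for the $A$ and $B$ reuses we have only the one-sided estimates $DMD^{low}_F \le DMD_F \le DMD^{up}_F$ and $DMD^{low}_G \le DMD_G \le DMD^{up}_G$. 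Because each family enters the total with a $+$ sign and there are no cross terms, a term-by-term upper bound yields an upper bound on the whole, and likewise for the lower bound.

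Concretely, I would form the upper bound by adding the exact temporary coefficients to $DMD^{up}_F$ and $DMD^{up}_G$, and the lower bound by adding them instead to $DMD^{low}_F$ and $DMD^{low}_G$, then evaluate each coefficient numerically. The two temporary coefficients sum to roughly $8.13$; the $A$ and $B$ coefficients all carry the common denominator $4\sqrt{2}-1$ produced by the geometric series over the $\log_2(N)+1$ tree levels (the series in $i$ has ratio $2^{5/2}=4\sqrt{2}$ and is therefore dominated by its top level $i=\log_2(N)$, which is exactly what pins the exponent at $3.5$). Adding the upper $F,G$ coefficients to the temporary total gives the stated $\approx 13.46\,N^{3.5}$, and adding the lower ones gives $\approx 12.82\,N^{3.5}$.

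The heavy analytic lifting is already discharged by the preceding lemmas, so the remaining work is careful bookkeeping, and the only real obstacles are two consistency checks. First, I must confirm that the three families are genuinely disjoint and exhaustive, so that no access is double-counted or omitted; this is precisely the content of Theorem~\ref{thm:rds}, which I would invoke directly. Second, I must verify that every surviving contribution is genuinely $\Theta(N^{3.5})$ and that no family secretly injects a same-order term that was dropped. In particular, the fourth quadrant of $DT_{2,N}$ has reuse distance $\Theta(N^2)$ rather than $\Theta(N^3)$, so its DMD grows only like $N^3\log(N)$ and must be correctly discarded from the leading coefficient. Mixing the exact temporary asymptotics with the one-sided $F,G$ bounds is legitimate because an exact value is simultaneously its own upper and lower bound, so the composite inequalities remain valid. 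The final step is purely arithmetic: substitute the radical coefficients, combine over $4\sqrt{2}-1$, and round to recover the two numeric constants.
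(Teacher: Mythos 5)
Your proposal is correct and is essentially the paper's own argument: the theorem follows by summing the exact temporary contributions $DMD_{DT1}+DMD_{DT2}\approx 8.13\,N^{3.5}$ with the upper bounds $DMD^{up}_{F}+DMD^{up}_{G}$ (giving $\approx 13.46\,N^{3.5}$) and with the lower bounds $DMD^{low}_{F}+DMD^{low}_{G}$ (giving $\approx 12.82\,N^{3.5}$), exactly as you describe. Your two consistency checks---the disjoint, exhaustive partition of accesses from Theorem~\ref{thm:rds}, and discarding the $O(N^{3}\log N)$ contribution from the $3N^{2}$ quadrant of $DT_{2,N}$---are precisely the steps the paper leaves implicit in its ``combining the previous'' derivation.
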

\subsubsection{Memory Management}
The RMM pseudocode analyzed earlier allocates memory on each call but contains no calls to \textbf{free()}, resulting in poor locality. Practical implementations of RMM will use memory management strategies for handling temporaries, so we will now adapt the previous DMD analysis to take this into account.

One way to introduce temporary freeing would be to call \textbf{free()} twice after each addition group, once the addition result has been stored in a $C$ submatrix. Doing so creates the following upper bound on the total number of temporaries needed:
\begin{align*}
    \#T(N) = N^2 + \sum_{i=0}^{log_2(N) - 1} 2*(2^i)^2 = \frac{2}{3}(N^2 - 1)\\
    \#T(N) < 2N^2\hspace{2.0cm}
\end{align*}
With this, a bound on the total data size of the execution (including input data) is $N^2 + N^2 + 2N^2 = 4N^2$. We previously derived reuse distances as functions of tree position: we can now reuse our analysis of the DMD of RMM without temporary reuse, but instead of using said functions, we will use the minimum of those functions and $4N^2$ (as reuse distance is always upper bounded by data size). For temporary matrix $DT_1$:
\begin{align*}
    &DMD^{up}_{DT1} = \\&\sum_{i=0}^{log_2(N)}\frac{N^3\cdot\sqrt{min(2(2^i)^3, 4N^2)}}{4\cdot2^i} + \sum_{i=0}^{log_2(N)}\frac{N^3\cdot\sqrt{min(3(2^i)^3, 4N^2)}}{4\cdot2^i} \\&+ \sum_{i=0}^{log_2(N)}\frac{N^3\cdot\sqrt{min(\frac{7(2^i)^3}{2}, 4N^2)}}{4\cdot2^i} + \sum_{i=0}^{log_2(N)}\frac{N^3\cdot\sqrt{min(\frac{5(2^i)^3}{2}, 4N^2)}}{4\cdot2^i}
\end{align*}
Solving for the points where the parameters to \textbf{min()} are equal, we partition each summation into two by splitting values of induction variable \textit{i}:
\begin{align*}
   \textstyle DMD^{up}_{DT1} = \sum_{i=0}^{\lfloor\frac{log_2(2N^2)}{3}\rfloor}\frac{N^3\cdot\sqrt{2(2^i)^3}}{4\cdot2^i} + \sum_{i = \lfloor\frac{log_2(2N^2)}{3}\rfloor + 1}^{log_2(N)} \frac{N^4}{2(2^i)} \\\textstyle+ \sum_{i=0}^{\lfloor\frac{log_2(\frac{4}{3}N^2)}{3}\rfloor}\frac{N^3\cdot\sqrt{3(2^i)^3}}{4\cdot2^i} + \sum_{\lfloor\frac{log_2(\frac{4}{3}N^2)}{3}\rfloor + 1}^{log_2(N)} \frac{N^4}{2(2^i)} \\\textstyle+ \sum_{i=0}^{\lfloor\frac{log_2(\frac{8}{7}N^2)}{3}\rfloor}\frac{N^3\cdot\sqrt{\frac{7}{2}(2^i)^3}}{4\cdot2^i} + \sum_{\lfloor\frac{log_2(\frac{8}{7}N^2)}{3}\rfloor + 1}^{log_2(N)} \frac{N^4}{2(2^i)} \\\textstyle+ \sum_{i=0}^{\lfloor\frac{log_2(\frac{8}{5}N^2)}{3}\rfloor}\frac{N^3\cdot\sqrt{\frac{5}{2}(2^i)^3}}{4\cdot2^i} + \sum_{\lfloor\frac{log_2(\frac{8}{5}N^2)}{3}\rfloor + 1}^{log_2(N)} \frac{N^4}{2(2^i)}
\end{align*}
Evaluating:
\begin{align*}
    DGC_{DT1} \sim (\frac{1}{2\cdot 2^{\frac{1}{3}}} + \frac{1}{2\cdot \frac{4}{3}^{\frac{1}{3}}} + \frac{1}{2\cdot \frac{8}{7}^{\frac{1}{3}}} + \frac{1}{2\cdot \frac{8}{5}^{\frac{1}{3}}} \\+ \frac{(2 + \sqrt{2})}{4}(2^{\frac{1}{6}}\cdot \sqrt{2} + \frac{4}{3}^{\frac{1}{6}}\cdot \sqrt{3} + \frac{8}{7}^{\frac{1}{6}}\cdot \sqrt{\frac{7}{2}} + \frac{8}{5}^{\frac{1}{6}}\cdot \sqrt{\frac{5}{2}}))N^{\frac{10}{3}}
\end{align*}
The same analysis for $DT_2$ results in the following:
\begin{align*}
    DMD^{up}_{DT2} \sim (\frac{1}{2\cdot 4^{\frac{1}{3}}} + \frac{1}{2\cdot 8^{\frac{1}{3}}} + \frac{1}{2\cdot \frac{8}{3}^{\frac{1}{3}}} \\+ \frac{(2 + \sqrt{2})}{4}(2^{\frac{1}{3}} + 8^{\frac{1}{6}}\cdot \sqrt{\frac{1}{2}} + \frac{8}{3}^{\frac{1}{6}}\cdot \sqrt{\frac{3}{2}}))N^{\frac{10}{3}}
\end{align*}
We analyze $F$ and $G$ in the same way, but find that they are asymptotically insignificant in comparison to $DT_1$ and $DT_2$. So, we arrive at our DMD bound for RMM with memory management:
\begin{theorem}[RMM Data Movemement Distance With Temporary Reuse: Upper Bound]
Combining the previous, an upper bound on the data movement distance incurred by a  recursive matrix multiplication algorithm on $NxN$ matrices employing temporary reuse is as follows:
\begin{align*}
    \sim(DGC^{up}_{RMM}) < 11.85N^{\frac{10}{3}}
\end{align*}
\end{theorem}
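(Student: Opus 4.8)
The plan is to exploit the one new fact that reclamation buys us: with a \texttt{free()} call after each addition group, the simultaneously live footprint never exceeds $4N^2$ — the two $N\times N$ inputs together with the at most $2N^2$ live temporaries bounded above. Since a reuse distance counts distinct data touched between two accesses, it can never exceed the live footprint, so every reuse-distance value from the no-reclamation analysis may be replaced by its minimum with $4N^2$ without loss. Concretely, I would carry over each per-quadrant DMD sum for $DT_1$ and $DT_2$ from the preceding subsection and substitute $\min(\cdot,\,4N^2)$ for each cubic reuse-distance argument, exactly as displayed in the capped expression for $DMD^{up}_{DT1}$.

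The core step is the crossover split. For a quadrant whose uncapped reuse distance is $c\,(2^i)^3$, the cap binds precisely once $c\,(2^i)^3 \ge 4N^2$, i.e. for $i$ above $\lfloor\tfrac13\log_2(\tfrac{4}{c}N^2)\rfloor$; this is the floor index appearing in each summand. I would split each of the eight quadrant sums there. Below the crossover the summand keeps its original form $\tfrac{N^3\sqrt{c(2^i)^3}}{4\cdot 2^i}\propto (2^i)^{1/2}$, an increasing geometric sum dominated by its top term; above it the summand collapses to the constant-cap term $\tfrac{N^3\sqrt{4N^2}}{4\cdot 2^i}=\tfrac{N^4}{2\cdot 2^i}$, a decreasing geometric tail dominated by its first term. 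Because the crossover sits at $2^i\approx N^{2/3}$, each half evaluates to $\Theta(N^{10/3})$; summing the geometric series in closed form produces the fractional-power coefficients and the common $\tfrac{2+\sqrt2}{4}$ factor, yielding the two displayed constants for $DMD^{up}_{DT1}$ and $DMD^{up}_{DT2}$.

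I would then dispose of the input-matrix accesses $F$ and $G$ by the same capping. Their uncapped leading reuse distances are also $\Theta(N^3)$, but their far larger multiplicities mean the capped cost above the crossover, where each access pays only $\sqrt{4N^2}=2N$, sums to just $\Theta(N^3)$, strictly below $N^{10/3}$; hence $F$ and $G$ fall out of the leading-order bound entirely. The final step is pure arithmetic: add the two $N^{10/3}$ coefficients from $DT_1$ and $DT_2$ and check that their sum is below $11.85$. The main obstacle is not conceptual but the bookkeeping of the crossover evaluation — pinning down the floor index, summing the decreasing geometric tail exactly, and tracking the endpoint term that contributes the $\tfrac{2+\sqrt2}{4}$ factor — together with first justifying rigorously that the free-twice-per-group strategy really caps the live footprint at $4N^2$, since an off-by-a-constant-factor error there propagates directly into the final coefficient.
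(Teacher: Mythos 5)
Your proposal follows essentially the same route as the paper's own derivation: bound the live footprint at $4N^2$ via the free-twice-per-addition-group strategy, cap every reuse distance with $\min(\cdot,\,4N^2)$, split each quadrant sum of $DT_1$ and $DT_2$ at the crossover index $\bigl\lfloor \tfrac{1}{3}\log_2(\tfrac{4}{c}N^2)\bigr\rfloor$, evaluate the increasing and decreasing geometric halves to obtain the $N^{10/3}$ coefficients (including the $\tfrac{2+\sqrt{2}}{4}$ factor), discard $F$ and $G$ as asymptotically insignificant after capping, and sum the two coefficients to verify the $11.85$ bound. This matches the paper's proof step for step, so no further comparison is needed.
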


\begin{table*}[t!]
\centering
\begin{tabular}{ccccccc}
\toprule
\multirow{2}{*}{\textbf{Algorithm}}      & \multicolumn{2}{c}{\textbf{MM}} & \multicolumn{2}{c}{\textbf{RMM}} & \multicolumn{2}{c}{\textbf{Strassen}} \\ \cmidrule(l){2-7}
                             &  Naive & Tiled & Naive & Temporary Reuse & Naive & Temporary Reuse  \\ \midrule
                             \textbf{Time Comp.} & $O(N^3)$ & $O(N^3)$ & $O(N^3)$ & $O(N^3)$ & $O(N^{2.8})$ & $O(N^{2.8})$ \\\midrule
                             \textbf{DMD} & $N^4$ & $\sqrt{2}N^3D + \frac{2\sqrt{3}N^4}{D}$ & $13.46N^{3.5}$ & $11.85N^{3.33}$ &  $6.51N^{3.4}$ &$15.36N^{3.23}$\\
                             
                             \bottomrule
\end{tabular}
\caption{Summary of DMD complexity compared to time complexity}
\label{tab:dmd_results}
\end{table*}

\subsection{Strassen's Algorithm}
Strassen's algorithm for matrix multiplication, which reduces time complexity from $O(N^3)$ to around $O(N^{2.8})$ at the cost of worse locality and additional $O(N^2)$ cost, is our final target for analysis. Figure \ref{fig:strassen} \cite{wiki_Strassen} gives pseudocode for Strassen's, which is similar to standard RMM. The core idea is that each call to Strassen's decomposes into 7 recursive calls instead of 8 but contains additional matrix arithmetic. As before, we will analyze Strassen's both with and without locality optimizations for temporary reuse. 
\begin{figure}[h]
    \centering
    \includegraphics[width=\columnwidth]{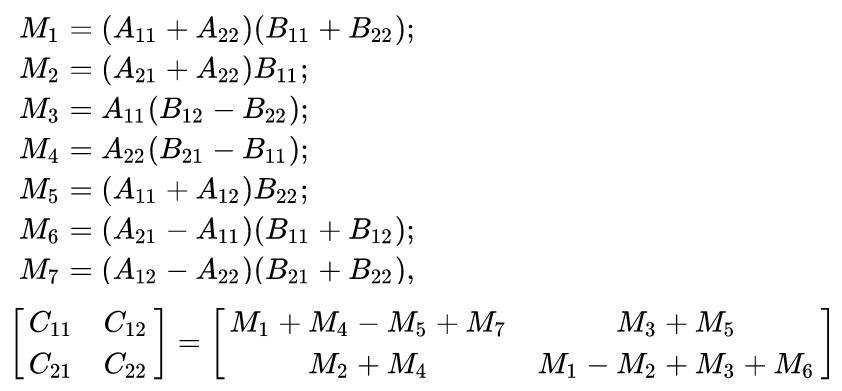}
    \caption{Strassen's algorithm pseudocode}
    \label{fig:strassen}
\end{figure}
Note in the pseudocode that, at each level of recursion, 17 temporary matrices are introduced: $M_1...M_7$ store the results of recursive computation, and there are ten matrix additions or subtractions that are then passed into recursive calls. As each of these matrices is $\frac{N}{2} \times \frac{N}{2}$, the total number of temporaries needed for this call is $\frac{17N^2}{4}$. Summing over all nodes in the tree decomposition:
\begin{lemma}[Strassen's Algorithm Temporary Usage]
    Let $TS_N$ be the total number of temporaries required by an $N \times N$ execution of Strassen's algorithm where all temporaries are unique. Then
    \begin{align*}
        TS_N = \sum_{i=1}^{log_2(N)} \frac{17}{4} \cdot (2^i)^2 \cdot 7^{log_2(N) - i} = \frac{17}{3}(N^{log_2(7)} - N^2)
    \end{align*}
\end{lemma}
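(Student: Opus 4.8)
The plan is to mirror the node-counting argument used for RMM in Definition \ref{def:nodecount}, but with the branching factor changed from $8$ to $7$, and then collapse the resulting count of temporaries into a single finite geometric series. The recursion tree for Strassen's algorithm is a complete $7$-ary tree: the root is an $LN$ node, and each non-leaf $L(2^k)$ node spawns exactly seven $L(2^{k-1})$ children. Consequently, at depth $\log_2 N - i$ from the root the nodes multiply $2^i \times 2^i$ matrices, and there are $7^{\log_2 N - i}$ of them, in direct analogy with the $8^{\log_2 N - \log_2 X}$ count for RMM. The base-case nodes of size $2^0 = 1$ perform only a scalar multiply and introduce no temporaries, which is precisely why the summation index begins at $i = 1$ rather than $i = 0$.

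Next I would fix the per-node contribution. An $L(2^i)$ node decomposes into seven products of $2^{i-1} \times 2^{i-1}$ matrices and, as established in the paragraph preceding the lemma, allocates $17$ temporary matrices each of dimension $2^{i-1} \times 2^{i-1}$; hence it introduces $17 \cdot (2^{i-1})^2 = \tfrac{17}{4}(2^i)^2$ fresh temporaries under the ``all temporaries unique'' assumption. Multiplying this per-node count by the number $7^{\log_2 N - i}$ of $L(2^i)$ nodes and summing over the decomposing levels $i = 1, \dots, \log_2 N$ yields exactly the middle expression for $TS_N$ claimed in the lemma.

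It then remains to evaluate the sum in closed form. I would factor out $7^{\log_2 N}$ and rewrite the summand as a geometric term:
\begin{align*}
    TS_N = \frac{17}{4}\sum_{i=1}^{\log_2 N} 4^{i}\, 7^{\log_2 N - i}
         = \frac{17}{4}\, 7^{\log_2 N} \sum_{i=1}^{\log_2 N}\left(\frac{4}{7}\right)^{i}.
\end{align*}
The finite geometric series with ratio $\tfrac{4}{7}$ evaluates to $\tfrac{4}{3}\bigl(1 - (\tfrac{4}{7})^{\log_2 N}\bigr)$, since $\tfrac{(4/7)}{1 - (4/7)} = \tfrac{4}{3}$. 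The simplification then rests on the two change-of-base identities $7^{\log_2 N} = N^{\log_2 7}$ and $4^{\log_2 N} = N^{2}$, which give $(\tfrac{4}{7})^{\log_2 N} = N^2 / N^{\log_2 7}$; substituting yields $TS_N = \tfrac{17}{3} N^{\log_2 7}\bigl(1 - N^2/N^{\log_2 7}\bigr) = \tfrac{17}{3}(N^{\log_2 7} - N^2)$.

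The genuinely routine part is the geometric sum; the only place that demands care is the combinatorial bookkeeping — aligning the node count $7^{\log_2 N - i}$ with the index $i$, handling the off-by-one at the base case (so that $i$ starts at $1$), and tracking the factor of four that arises because the temporaries are $\tfrac{N}{2}$-sized rather than $N$-sized. Those indexing details are where an error would most plausibly creep in, so before declaring the closed form I would sanity-check it against the small cases $N = 2$ and $N = 4$ (namely $TS_2 = 17$ and $TS_4 = \tfrac{17}{3}(49 - 16) = 187$), confirming agreement between the summation and the closed-form expression.
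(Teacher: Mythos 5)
Your proposal is correct and follows essentially the same route as the paper: count $\tfrac{17}{4}(2^i)^2$ temporaries per $L(2^i)$ node (the $17$ half-sized matrices identified in the pseudocode), multiply by the $7^{\log_2 N - i}$ nodes at that level of the $7$-ary recursion tree, and sum over levels. The paper asserts the closed form without showing the geometric-series evaluation or the change-of-base identities, so your write-up simply makes explicit the algebra (and the sanity checks $TS_2 = 17$, $TS_4 = 187$) that the paper leaves implicit.
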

We can now asymptotically upper bound the reuse distances of each access from each quadrant of $A,B$ as well as the temporary matrices $M$ in terms of $TS_N$ by counting the calls to an $\frac{N}{2} \times \frac{N}{2}$ matrix multiplication between them. For example, in Figure \ref{fig:strassen} we see that there exist three calls between the first and second uses of $A_{1,1}$, upper bounding each reuse distance from an access in $A_{1,1}$ by $3\cdot TS_{\frac{N}{2}}$. Without enumerating all of these distances, we see there are a total of 31 such intervals, with a total distance of $96 \cdot TS_{\frac{2}{2}}$. As before, we can sum over all nodes to compute DMD:
\begin{align*}
    DMD_{Strassen} \leq \sum_{i=1}^{log_2(N)} \sqrt{78 \cdot TS_{2^{i-1}}} \cdot \frac{(2^i)^2}{4} \cdot 7^{log_2(N) - i}
\end{align*}
Evaluating and asymptotically simplifying:
\begin{align*}
    \sim (DMD_{Strassen}) \leq \frac{4\sqrt{34}(N^2 \sqrt{N^{log_2(7)}} - N^{log_2(7)})}{4\sqrt{7} - 7} 
\end{align*}
\begin{theorem}[Strassen's Algorithm DMD: Upper Bound]
An upper bound on the data movement distance incurred by Strassen's algorithm operating on $NxN$ matrices without temporary reuse is as follows:
\begin{align*}
    \sim (DMD_{Strassen}^{up}) < 6.51 N^{(2 + \frac{log_2(7)}{2})}
\end{align*}
\begin{align*}
    \sim (DMD_{Strassen}^{up}) <\approx 6.51 N^{3.4}
\end{align*}
\end{theorem}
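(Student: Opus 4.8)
The combinatorial heart of the argument is already in hand: the interval count (the $31$ reuse pairs per node, each of reuse distance at most a constant multiple of $TS_{N/2}$) collapses all of Strassen's $A$, $B$, and $M$ accesses into the single displayed bound $DMD_{Strassen} \le \sum_{i=1}^{\log_2 N} \sqrt{78\cdot TS_{2^{i-1}}}\cdot \tfrac{(2^i)^2}{4}\cdot 7^{\log_2 N - i}$. So the remaining work is purely to evaluate this sum and isolate its leading term; this is what I would carry out.

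First I would substitute the closed form from the temporary-usage lemma, writing $TS_{2^{i-1}} = \frac{17}{3}\bigl((2^{i-1})^{\log_2 7} - (2^{i-1})^2\bigr) = \frac{17}{3}\bigl(7^{i-1} - 4^{i-1}\bigr)$, using $(2^{i-1})^{\log_2 7} = 7^{i-1}$. The $7^{i-1}$ term dominates, so $\sqrt{78\cdot TS_{2^{i-1}}}$ behaves like a constant times $7^{(i-1)/2}$. I would then collect the three powers appearing in the summand --- $7^{(i-1)/2}$ from the square root, $4^i$ from the quadrant size $\tfrac{(2^i)^2}{4}$, and $7^{\log_2 N - i}$ from the node count --- and factor out $7^{\log_2 N} = N^{\log_2 7}$. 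What is left inside the sum is $(4/\sqrt 7)^{\,i}$, so the whole expression reduces to $N^{\log_2 7}$ times a finite geometric series of ratio $r = 4/\sqrt 7$.

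Because $r = 4/\sqrt 7 > 1$, the series is top-heavy: $\sum_{i=1}^{\log_2 N} r^i = \frac{r(r^{\log_2 N}-1)}{r-1}$ is dominated by its last term, with leading coefficient $\frac{r}{r-1} = \frac{4}{4-\sqrt 7}$. Using $r^{\log_2 N} = N^2/N^{(\log_2 7)/2}$ and folding this back into the $N^{\log_2 7}$ prefactor produces exactly the displayed closed form $\frac{4\sqrt{34}\,(N^2\sqrt{N^{\log_2 7}} - N^{\log_2 7})}{4\sqrt 7 - 7}$. Its leading term is governed by $N^2\sqrt{N^{\log_2 7}} = N^{\,2 + (\log_2 7)/2}$; since $2 + \tfrac{\log_2 7}{2} \approx 3.40$ exceeds $\log_2 7 \approx 2.81$, the second term is asymptotically negligible. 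Reading off the leading coefficient $\frac{4\sqrt{34}}{4\sqrt 7 - 7}$ and evaluating it numerically gives $\approx 6.51$, and the exponent is $2 + \tfrac{\log_2 7}{2} \approx 3.4$, which is the claimed bound.

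The main obstacle is bookkeeping rather than structure, and it lives in the constant. Under the $\sim$ convention the leading coefficient is retained, so I must carry it exactly through the geometric-series evaluation while discarding only genuinely subleading pieces --- the $-4^{i-1}$ inside $TS$ and the $-N^{\log_2 7}$ in the closed form. A second, more foundational point I would want to verify is that the per-access bound feeding the sum (the $31$-interval count and the $78\cdot TS_{N/2}$ aggregate) really upper bounds every reuse distance in the tree, since the entire estimate rests on it. Finally, I would confirm the direction of dominance: the conclusion depends on $4/\sqrt 7 > 1$, which makes the series top-heavy and fixes the exponent at $2 + \tfrac{\log_2 7}{2}$; were the ratio below one the dominant level, and hence the exponent, would change.
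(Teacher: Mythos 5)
Your proposal is correct and takes essentially the same route as the paper: substitute $TS_{2^{i-1}}=\frac{17}{3}(7^{i-1}-4^{i-1})$, keep the dominant $7^{i-1}$ term, factor out $N^{\log_2 7}$, recognize the top-heavy geometric series with ratio $4/\sqrt{7}>1$, and read off the leading term $\frac{4\sqrt{34}}{4\sqrt{7}-7}N^{2+\frac{\log_2 7}{2}}\approx 6.51\,N^{3.4}$, exactly as the paper does. One caveat worth noting: the coefficient $4\sqrt{34}$ actually comes from the total interval weight $96$ stated in the paper's prose (since $\sqrt{96\cdot\tfrac{17}{3}}=\sqrt{544}=4\sqrt{34}$), not from the $78$ appearing in the paper's displayed sum that you start from (which would yield $\sqrt{442}\approx 21.0$, hence a coefficient of about $5.87$) --- an internal inconsistency of the paper that your write-up silently inherits, though the claimed upper bound holds either way.
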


\subsubsection{Memory Management}
We will adapt the previous analysis in the same way that we did when exploring the effect of adding temporary reuse to RMM. First, we note that Huss-Lederman et al. \cite{Huss-Lederman} discuss a memory management technique for Strassen's that requires only $N^2$ temporaries. This brings total data size for an execution to $3N^2$. We will again use this data size as an upper bound on the value of an individual reuse distance.

The largest of the intervals in the pseudocode is $7\cdot TS_{\frac{N}{2}}$, so we replace each of them with this so there is only one intersection point to consider. That shifts the sum from $96\cdot TS_{\frac{N}{2}}$ to $31 \cdot 7\cdot TS_{\frac{N}{2}} = 217 \cdot TS_{\frac{N}{2}}$. We must multiply our $3N^2$ upper bound by 31 as well, yielding $93N^2$. 

\begin{align*}
    \hspace{-0.5cm}DMD_{Strassen'}^{up} = \sum_{i=1}^{log_2(N)} \sqrt{min(217\cdot TS_{2^{i-1}}, 93N^2)} \cdot \frac{(2^i)^2}{4} \cdot 7^{log_2(N) - i}
\end{align*}
Again we solve for an approximate equivalence point for the two functions that preserves the upper bound:
\begin{align*}
    i \approx log_2(0.797 N^\frac{2}{log_2(7))})
\end{align*}

Partitioning the previous summation:
\medskip \newline \begin{align*}
    DMD_{Strassen'}^{up} = \sum_{i=1}^{\lfloor log_2(0.797 N^\frac{2}{log_2(7))})\rfloor} \sqrt{217\cdot TS_{2^{i-1}}} \cdot \frac{(2^i)^2}{4} \cdot 7^{log_2(N) - i}\\
    + \sum_{i=\lceil log_2(0.797 N^\frac{2}{log_2(7))})\rceil}^{log_2(N)} \sqrt{93N^2} \cdot \frac{(2^i)^2}{4} \cdot 7^{log_2(N) - i}
\end{align*}
Simplifying (while preserving the upper bound), removing asymptotically insignificant terms, and evaluating:
\begin{theorem}[Strassen's Algorithm With Temporary Reuse DMD: Upper Bound]
An upper bound on the data movement distance incurred by Strassen's algorithm operating on $NxN$ matrices with temporary reuse is as follows:
\begin{align*}
    \sim (DMD_{Strassen'}^{up}) < 15.36 N^{3.23}
\end{align*}
\end{theorem}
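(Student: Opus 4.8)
The plan is to start from the capped summation stated just before the theorem,
\begin{align*}
    DMD_{Strassen'}^{up} = \sum_{i=1}^{\log_2(N)} \sqrt{\min(217\cdot TS_{2^{i-1}},\, 93N^2)} \cdot \frac{(2^i)^2}{4} \cdot 7^{\log_2(N) - i},
\end{align*}
whose justification is already in place: every reuse distance is bounded above by the total live data, the Huss-Lederman scheme caps temporaries at $N^2$ so the data size is $3N^2$, and replacing each of the $31$ per-node intervals by the largest one ($7\cdot TS_{N/2}$) inflates the un-capped estimate to $217\cdot TS_{N/2}$ while collapsing the $31$ separate caps into a single $93N^2$. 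The only real work remaining is to evaluate this one-dimensional sum asymptotically and read off the leading coefficient.

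First I would locate the crossover index $i_0$ where the two arguments of the $\min$ coincide. Using the asymptotic form $TS_M \sim \tfrac{17}{3} M^{\log_2 7}$ and solving $217\cdot TS_{2^{i-1}} = 93N^2$ gives $i_0 \approx \log_2\!\big(0.797\, N^{2/\log_2 7}\big)$, exactly the value quoted before the statement. I would then split the sum at $i_0$: for $i \le \lfloor i_0\rfloor$ the growth term $217\cdot TS_{2^{i-1}}$ is the smaller argument, and for $i \ge \lceil i_0\rceil$ the constant cap $93N^2$ applies. After factoring out $7^{\log_2 N} = N^{\log_2 7}$, each half becomes a geometric sum in $i$: the lower half has common ratio $4/\sqrt{7} > 1$ (since $\sqrt{TS_{2^{i-1}}}\propto 7^{(i-1)/2}$), so it is dominated by its top term near $i_0$; the upper half has common ratio $4/7 < 1$, so it too is dominated by its term near $i_0$.

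Evaluating both dominant terms at $i_0$, I would substitute $2^{i_0}\approx 0.797\,N^{2/\log_2 7}$, giving $4^{i_0}\approx 0.797^2 N^{4/\log_2 7}$ and $7^{i_0}\approx 0.797^{\log_2 7}N^2$. A short computation shows both halves carry the same power of $N$, namely $\log_2 7 - 1 + 4/\log_2 7 \approx 3.23$, confirming the stated exponent. The two constant prefactors then add: the lower half contributes $\sqrt{217\cdot 17/3}$ times the series coefficient $1/(1-\sqrt{7}/4)$, the upper half contributes $\sqrt{93}$ times $1/(1-4/7)$, and I would verify their sum lies below $15.36$. The $O(1)$ slack introduced by $\lfloor\cdot\rfloor$ and $\lceil\cdot\rceil$ at $i_0$ perturbs only asymptotically-lower-order terms and is absorbed while preserving the upper bound.

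The hard part will be the bookkeeping around the $\min$ and the crossover: one must check that replacing the $31$ intervals by their common maximum, capping by $93N^2$, and rounding $i_0$ to an integer all push the estimate in the \emph{upper-bound} direction \emph{simultaneously}, so that no single approximation silently under-counts. One must also confirm that the $F$- and $G$-type input accesses — shown to be asymptotically dominated by the temporary accesses, exactly as in the non-managed RMM analysis — can be discarded without disturbing the leading $N^{3.23}$ coefficient. Once those monotonicity checks are secured, the remaining evaluation is routine geometric-sum algebra.
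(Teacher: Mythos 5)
Your proposal is correct and takes essentially the same route as the paper: it starts from the paper's capped summation, solves for the same crossover index $i_0 \approx \log_2(0.797\,N^{2/\log_2 7})$, partitions the sum there, and evaluates the two resulting geometric halves (ratios $4/\sqrt{7}$ and $4/7$, both dominated near $i_0$) at the common exponent $\log_2 7 - 1 + 4/\log_2 7 \approx 3.23$, with the combined constants landing just under $15.36$. This partition-and-evaluate step is exactly what the paper does, so the only remaining work is the routine bookkeeping you already identify.
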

\subsection{Summary}
Table \ref{tab:dmd_results} contains asymptotic simplifications of the results of the previous DMD analyses, with a mix of precise results and upper bounds. We make the following observations:
\begin{itemize}
    \item DMD is able to distinguish both between different algorithms with the same time complexity and between locality optimized vs. non-optimized versions of the same algorithm,
    \item The DMD reduction ($\approx N^{\frac{1}{6}}$) from adding temporary reuse is the same for RMM and Strassen even though they have different time and space complexities without temporary reuse,
    \item The gap between Strassen and RMM DMD is smaller than the gap between their time complexities, demonstrating that DMD has captured some of the factors that make Strassen not practical.
\end{itemize}


\section{Related Work}
\citet{HongK:STOC81} pioneered the study of I/O complexity, measuring memory complexity by the amount of data transfer and deriving this complexity symbolically as a function of the memory size and the problem size.  The same complexity measures were used in the study of cache oblivious algorithms~\citep{Frigo+:FOCS99} and communication-avoiding algorithms. Olivry et al. \cite{olivry+PLDI20} introduce a compiler technique to statically derive I/O complexity bounds. Olivry et al. use asymptotic complexity ($\sim$), much as we do, to consider constant-factor performance differences, while the rest do not. I/O complexity suffers from an issue inherited from miss ratio curves: it is not ordinal across cache sizes. Miss ratio as a function of cache size can be numerically compared by an algorithm designer when a target cache size is known, but it cannot be used to evaluate the general effectiveness of an optimization across cache sizes. DMD, on the other hand, is an ordinal metric that is agnostic to cache size.

Memory hierarchies in practice may vary in many ways, which make a unified cost model difficult.  \citet{Valiant:ESA08} defined a bridging model, Multi-BSP, for a multi-core memory hierarchy with a set of parameters including the number of levels and the memory size and three other factors at each level. A simpler model was the uniform memory hierarchy (UMH) by \citet{Alpern+:UMH94} who used a single scaling factor for the capacity and the access cost across all levels.  Both are models of memory, where caching is not considered beyond the point that the memory may be so implemented.

Matrix multiplication is well researched. Much effort has been put into the analysis and optimization of the Strassen algorithm and its cache utilization \cite{Thottethodi+ISC98, Pauca+98, Huss-Lederman, Singh_comparativestudy} as well as its parallel behavior \cite{Tang+SPAA20, Blelloch+SODA08}. Lincoln et al. \cite{Lincoln+SPAA18} explore performance in a dynamically sized caching environment. Recently, researchers have argued that, with proper implementation considerations and under the correct conditions, the Strassen algorithm can outperform more conventionally practical variants of MM \cite{Huang+SC16}.

\section{Conclusion}
We have explored the interactions between six variants of matrix multiplication and hierarchical memory through the lens of Data Movement Distance. We have demonstrated that DMD's assumptions conform with microarchitectural trends and that it is capable of exposing algorithmic properties that traditional analyses cannot. Time complexity, while an important theoretic metric by which to analyze algorithms, is at odds with a computing environment in which memory systems are increasingly large and complex. We argue that data movement complexity analysis through DMD has great potential to help engineers and algorithm designers to understand the algorithmic implications of hierarchical memory.



\balance
\setcitestyle{numbers} 
\bibliographystyle{ACM-Reference-Format}
\bibliography{all}

\appendix

\end{document}